\RequirePackage[orthodox]{nag}
\PassOptionsToPackage{nameinlink}{cleveref}

\documentclass[cleveref]{lipics-v2021}
\hideLIPIcs
\nolinenumbers

\usepackage{microtype}

\usepackage{amsmath}
\usepackage{amsfonts}
\usepackage{amssymb}
\usepackage{amsthm}
\usepackage{graphicx}
\usepackage{xspace}
\usepackage[noadjust]{cite}

\title{On the approximability of graph visibility problems}

\author{Davide Bilò}{Department of Information Engineering, Computer Science, and Mathematics, University of L'Aquila, Italy}{davide.bilo@univaq.it}{https://orcid.org/0000-0003-3169-4300}{}

\author{Alessia {Di Fonso}}{Department of Information Engineering, Computer Science, and Mathematics, University of L'Aquila, Italy}{alessia.difonso@univaq.it}{https://orcid.org/0000-0002-9093-0679}{}

\author{Gabriele {Di Stefano}}{Department of Information Engineering, Computer Science, and Mathematics, University of L'Aquila, Italy}{gabriele.distefano@univaq.it}{https://orcid.org/0000-0003-4521-8356}{}

\author{Stefano Leucci}{Department of Information Engineering, Computer Science, and Mathematics, University of L'Aquila, Italy}{stefano.leucci@univaq.it}{https://orcid.org/0000-0002-8848-7006}{}

\authorrunning{D. Bilò, A. Di Fonso, G. Di Stefano, and S. Leucci} %TODO mandatory. First: Use abbreviated first/middle names. Second (only in severe cases): Use first author plus 'et al.'

\Copyright{Davide Bilò, Alessia Di Fonso, Gabriele Di Stefano, and Stefano Leucci} %TODO mandatory, please use full first names. LIPIcs license is "CC-BY";  http://creativecommons.org/licenses/by/3.0/

\begin{CCSXML}
<ccs2012>
<concept>
<concept_id>10002950.10003624.10003633.10003640</concept_id>
<concept_desc>Mathematics of computing~Paths and connectivity problems</concept_desc>
<concept_significance>500</concept_significance>
</concept>
<concept>
<concept_id>10002950.10003624.10003633.10010918</concept_id>
<concept_desc>Mathematics of computing~Approximation algorithms</concept_desc>
<concept_significance>500</concept_significance>
</concept>
<concept>
<concept_id>10002950.10003624.10003633.10003646</concept_id>
<concept_desc>Mathematics of computing~Extremal graph theory</concept_desc>
<concept_significance>100</concept_significance>
</concept>
<concept>
<concept_id>10002950.10003624.10003633.10003637</concept_id>
<concept_desc>Mathematics of computing~Hypergraphs</concept_desc>
<concept_significance>100</concept_significance>
</concept>
</ccs2012>
\end{CCSXML}

\ccsdesc[500]{Mathematics of computing~Paths and connectivity problems}
\ccsdesc[500]{Mathematics of computing~Approximation algorithms}
\ccsdesc[100]{Mathematics of computing~Extremal graph theory}
\ccsdesc[100]{Mathematics of computing~Hypergraphs}

\keywords{visibility problems, mutual visibility, general position, inapproximability} %TODO mandatory; please add comma-separated list of keywords

\newcommand{\gp}{\mathit{gp}}

\newcommand{\mut}{\mu_{\rm t}}
\newcommand{\mud}{\mu_{\rm d}}
\newcommand{\muo}{\mu_{\rm o}}

\newcommand{\cliqueis}{\textsc{Max-Paired-Clique-IS}\xspace}
\newcommand{\mis}{\textsc{Max-Independent-Set}\xspace}

\begin{document}

\maketitle

\begin{abstract}
Visibility problems have been investigated for a long time under different assumptions as they pose challenging combinatorial problems and are connected to robot navigation problems. The \emph{mutual-visibility} problem in a graph $G$ of $n$ vertices asks to find the largest set of vertices $X\subseteq V(G)$, also called $\mu$-set, such that for any two vertices $u,v\in X$, there is a shortest $u,v$-path $P$ where all internal vertices of $P$ are not in $X$. This means that $u$ and $v$ are visible w.r.t. $X$. Variations of this problem are known as \emph{total}, \emph{outer}, and \emph{dual} mutual-visibility problems, depending on the visibility property of vertices inside and/or outside $X$. The mutual-visibility problem and all its variations are known to be $\mathsf{NP}$-complete on graphs of diameter $4$. 

In this paper, we design a polynomial-time algorithm that finds a $\mu$-set with size $\Omega\left( \sqrt{n/ \overline{D}} \right)$, where $\overline D$ is the average distance between any two vertices of $G$. Moreover, we show inapproximability results for all visibility problems on graphs of diameter $2$ and strengthen the inapproximability ratios for graphs of diameter $3$ or larger. More precisely, for graphs of diameter at least $3$ and for every constant $\varepsilon > 0$, we show that mutual-visibility and dual mutual-visibility problems are not approximable within a factor of $n^{1/3-\varepsilon}$, while outer and total mutual-visibility problems are not approximable within a factor of $n^{1/2 - \varepsilon}$, unless $\mathsf{P}=\mathsf{NP}$.

Furthermore we study the relationship between the mutual-visibility number and the general position number in which no three distinct vertices $u,v,w$ of $X$ belong to any shortest path of $G$. 
\end{abstract}

\section{Introduction}

Mutual-visibility problems on the Euclidean plane involve determining if a set of points or entities can see each other without any obstacles blocking their line of sight. These problems have been investigated for a long time under different assumptions. The root of visibility problems dates back to the end of the 1800s when Dudeney first introduced the famous \emph{no-three-in-line} problem in~\cite{Dudeney17}: given an $n \times n$ grid, find the maximum number of points such that there are no three points on a line and it is still an open problem.
The notion of visibility can be also defined in discrete spaces like graphs requiring that a set of entities see each other along the shortest paths connecting them without any obstacles.
Visibility problems on networks pose interesting theoretical problems both in graph theory and combinatorics but are also of practical importance in research areas like distributed computing by mobile entities in connection to robot navigation problems~\cite{bhagat20,CiceroneFSN23,CDDN23-PMC,LunaFCPSV17,PoudelAS21,SharmaVT21}. 

For robots moving in the Euclidean plane, achieving a configuration of mutual visibility is crucial. A robot whose visibility is obstructed by others, might not be able to complete its task. Conversely, when robots are mutually visible, they can all see each other and collaborate to solve problems. In communication or social networks, a subset of agents located on some nodes of the network may need to communicate in an efficient (using the shortest paths) and confidential way, that is, in such a way that the exchanged messages do not pass through other agents in the subset. 

The concept of mutual-visibility in graphs  has been recently introduced and studied in~\cite{DiStefano22}. Given a set of vertices of a graph, two vertices $u,v$ are in \emph{mutual-visibility} if there exists a shortest $u,v$-path without further vertices of the set. A set of vertices is a  \emph{mutual-visibility set} if each pair of vertices in the set is in mutual-visibility.
This graph-based mutual-visibility concept generated significant interest within the research community since its introduction producing a remarkable number of articles~\cite{EJB24, Bresar,Bujtas,CiceroneFSN23-SSS,CDDN23-PMC, CFSNP24,lagos23,variety-2023,CiceroneDK23,kuziak-2023,tian-2023+}.
In the \emph{mutual-visibility problem} the goal is to find the maximum number of vertices that can be in mutual visibility on a graph $G$.
This problem is $\mathsf{NP}$-complete~\cite{DiStefano22} on general graphs, whereas there exist exact formulas for special graph classes like paths, cycles, trees, block graphs, cographs, grids~\cite{CiceroneDK23,DiStefano22} and for both the Cartesian and the Strong product of graphs~\cite{CiceroneDK23}. 

Formally, given a connected graph $G$ and a set of vertices $X \subseteq V(G)$, two vertices $x,y\in V(G)$ are said to be $X$-\emph{visible} if there is a shortest $x,y$-path whose internal vertices do not belong to $X$. If every two vertices from $X$ are $X$-visible, then $X$ is a \emph{mutual-visibility set} (or $\mu$-set). 
Moreover, several other metrics are introduced in~\cite{variety-2023}, to consider all the possible ``visibility'' situations occurring between the vertices of a graph. 
Let $\overline{X}=V(G)\setminus X$.
A set $X$ is said to be an \emph{outer mutual-visibility set} (or $\muo$-set) if every two vertices $x,y\in X$ are $X$-visible, and every two vertices $x\in X$ and $y\in \overline{X}$ are $X$-visible.
A set $X$ is a \emph{dual mutual-visibility set} (or $\mud$-set) if every two vertices $x,y\in X$ are $X$-visible, and every two vertices $x,y\in \overline{X}$ are $X$-visible. Finally a set $X$ is said a \emph{total mutual-visibility set} (or $\mut$-set) if every two vertices $x,y\in V(G)$ are $X$-visible.
If $\tau\in \{\mu, \mud, \muo, \mut\}$, then the cardinality of the largest $\tau$-set is called the \emph{$\tau$-number} of $G$ and is denoted by $\tau(G)$. A $\tau$-set $X$ such that $|X| = \tau(G)$ is called a \emph{maximum $\tau$-set} of $G$. For each of the above variants, the respective optimization problem asks to find the maximum $\tau$-set in a given graph $G$.
\begin{figure}[t]
   \centering
   \def\svgwidth{1\columnwidth}
     \large\scalebox{1}{%% Creator: Inkscape inkscape 0.92.4, www.inkscape.org
%% PDF/EPS/PS + LaTeX output extension by Johan Engelen, 2010
%% Accompanies image file '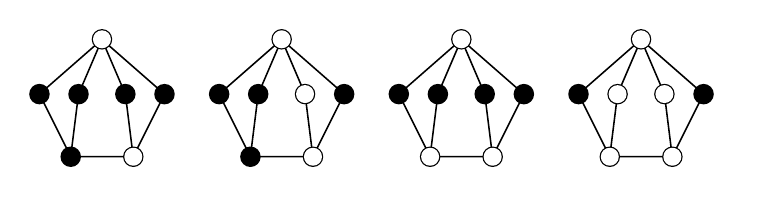' (pdf, eps, ps)
%%
%% To include the image in your LaTeX document, write
%%   \input{<filename>.pdf_tex}
%%  instead of
%%   \includegraphics{<filename>.pdf}
%% To scale the image, write
%%   \def\svgwidth{<desired width>}
%%   \input{<filename>.pdf_tex}
%%  instead of
%%   \includegraphics[width=<desired width>]{<filename>.pdf}
%%
%% Images with a different path to the parent latex file can
%% be accessed with the `import' package (which may need to be
%% installed) using
%%   \usepackage{import}
%% in the preamble, and then including the image with
%%   \import{<path to file>}{<filename>.pdf_tex}
%% Alternatively, one can specify
%%   \graphicspath{{<path to file>/}}
%% 
%% For more information, please see info/svg-inkscape on CTAN:
%%   http://tug.ctan.org/tex-archive/info/svg-inkscape
%%
\begingroup%
  \makeatletter%
  \providecommand\color[2][]{%
    \errmessage{(Inkscape) Color is used for the text in Inkscape, but the package 'color.sty' is not loaded}%
    \renewcommand\color[2][]{}%
  }%
  \providecommand\transparent[1]{%
    \errmessage{(Inkscape) Transparency is used (non-zero) for the text in Inkscape, but the package 'transparent.sty' is not loaded}%
    \renewcommand\transparent[1]{}%
  }%
  \providecommand\rotatebox[2]{#2}%
  \newcommand*\fsize{\dimexpr\f@size pt\relax}%
  \newcommand*\lineheight[1]{\fontsize{\fsize}{#1\fsize}\selectfont}%
  \ifx\svgwidth\undefined%
    \setlength{\unitlength}{375.36852679bp}%
    \ifx\svgscale\undefined%
      \relax%
    \else%
      \setlength{\unitlength}{\unitlength * \real{\svgscale}}%
    \fi%
  \else%
    \setlength{\unitlength}{\svgwidth}%
  \fi%
  \global\let\svgwidth\undefined%
  \global\let\svgscale\undefined%
  \makeatother%
  \begin{picture}(1,0.26072645)%
    \lineheight{1}%
    \setlength\tabcolsep{0pt}%
    \put(0,0){\includegraphics[width=\unitlength,page=1]{mu-sets.pdf}}%
    \put(0.08098319,0.0169032){\color[rgb]{0,0,0}\makebox(0,0)[lt]{\lineheight{1.25}\smash{\begin{tabular}[t]{l}max $\mu$-set\end{tabular}}}}%
    \put(0.30875947,0.0169032){\color[rgb]{0,0,0}\makebox(0,0)[lt]{\lineheight{1.25}\smash{\begin{tabular}[t]{l}max $\mud$-set\end{tabular}}}}%
    \put(0.54053189,0.0169032){\color[rgb]{0,0,0}\makebox(0,0)[lt]{\lineheight{1.25}\smash{\begin{tabular}[t]{l}max $\muo$-set\end{tabular}}}}%
    \put(0.76830826,0.0169032){\color[rgb]{0,0,0}\makebox(0,0)[lt]{\lineheight{1.25}\smash{\begin{tabular}[t]{l}max $\mut$-set\end{tabular}}}}%
  \end{picture}%
\endgroup%
}
    \caption{Examples of maximum $\tau$-sets on a graph $G$. Note that the $\mud$-set is neither a $\mut$-set nor a $\muo$-set. The $\mud$-set is neither a $\mut$-set nor a $\muo$-set. The $\mut$-set is a feasible $\mu$-set, $\mud$-set and $\muo$-set that is not maximum.}
\label{fig:mu-sets}
\end{figure}

\smallskip

The mutual-visibility problem is connected to classical topics in combinatorics. For example, solving this problem in the Cartesian product of complete graphs is equivalent to solving an instance of Zarankiewicz's problem (see~\cite{CiceroneDK23}). Finding the smallest maximal mutual-visibility set of a graph has been proven to be closely related to a classical Bollob\'as-Wessel theorem (see~\cite{Bresar}). Also, the optimization visibility problems introduced above can be reformulated as Tur\'an-type problems on hypergraphs and line graphs (see~\cite{Bujtas,EJB24}).
The mutual-visibility problem is also related to the \emph{general position problem} in graphs that asks to determine a largest set $X$ of vertices of $G$ such that no three vertices of $X$ lie on a common shortest path~\cite{ManuelK18,manuel-2018, ullas-2016}. Such a set is called \emph{general position set} (or $gp$-set) and the cardinality of a maximum $gp$-set is called the $gp$-number of $G$ and denoted by $gp(G)$.

\subparagraph*{Our results.} For a given graph $G$ of $n$ vertices, the problem of finding the $\gp$-number $\gp(G)$, as well as the problems of finding the $\tau$-number $\tau(G)$, for $\tau \in \{\mu, \mud, \muo, \mut\}$, have been proved to be $\mathsf{NP}$-complete, respectively in~\cite{ManuelK18} and in~\cite{variety-2023}. All the $\mathsf{NP}$-completeness results hold for graphs of diameter 4 or larger. Then many other works restricted their attention to the study of these quantities for special classes of graphs (e.g., see~
\cite{AnandCCKT19,EJB24,CiceroneDK23,ManuelK18,lagos23,CFSNP24,klavzar-2021,tian-2021,Bujtas,kuziak-2023,tian-2023+}). However, these problems have not yet been considered from the approximability point of view. 
In this paper, we provide an  algorithm  that finds a $\mu$-set with size $\Omega\left( \sqrt{n/ \overline{D}} \right)$, where $\overline D$ is the average distance between any two vertices of the graph, and we present strong inapproximability results about the computation of $\gp(G)$, $\mu(G)$, $\mud(G)$, $\muo(G)$, and $\mut(G)$ on graphs of diameter $2$ or larger, as summarized in \Cref{tab:results}.  We also study the relationship between the general position number and the mutual-visibility number in graphs of diameter~$2$.

\subparagraph*{Structure of the paper.} 
The next section provides some preliminary notions.  In \Cref{sec:Algo} we present the algorithm for approximating a maximum mutual-visibility set. Then, we study the computational complexity of finding maximum $\tau$-sets, for $\tau \in \{\mu, \mud, \muo, \mut\}$ in \Cref{sec:InapVis,sec:inapprox3}. In \Cref{sec:inaprox_gp} we show the inapproximability result of the general position problem and the relationship between the general position number and the mutual-visibility number. Finally, \Cref{sec:open} discusses some open problems.

\begin{table}[t]
\caption{Summary of the results in this paper. Here $\varepsilon$ denotes a positive constant of choice.}\label{tab:results}
\centering
\begin{tabular}{|c|c|c|c|}
\hline
Measure(s) &  Result        & Notes    & Reference                                  \\ \hline                $\mu$ &  $\mu$-set of size $\Omega(\sqrt{n / \overline{D} })$ in poly time & $\overline{D}$ is the avg. distance in $G$ & \Cref{theo:approx} \\ \hline       $\mu, \mud, \muo, \mut$  & $\mathsf{APX}$-Hard   & $ \text{diam}(G) = 2$   & \Cref{tau-apx}   \\ \hline     
$\mut$  & Not approximable within $n^{1/3-\varepsilon}$ & $\text{diam}(G) = 2$ & \Cref{mut-diam2}           \\ \hline
$\mu$, $\mud$  & Not approximable within $n^{1/3-\varepsilon}$ & $\text{diam}(G) \geq 3$ & \Cref{theo:inapx_diam_3} \\ \hline
$\muo$, $\mut$ & Not approximable within $n^{1/2-\varepsilon}$ & $\text{diam}(G) \geq 3$ & \Cref{theo:inapx_diam_3} \\ \hline
$\gp$    & Not approximable within $n^{1-\varepsilon}$   & $ \text{diam}(G) = 2$ & \Cref{thm:gp_inapx}     \\ \hline

\end{tabular}
\end{table}

\section{Preliminaries}
We consider undirected graphs and unless otherwise stated, all graphs in the paper are connected. Given a graph $G$, $V(G)$ and $E(G)$ are used to denote its vertex set and its edge set, respectively. The order of $G$, that is $|V(G)|$, is denoted by $n(G)$, and its size $|E(G)|$ is denoted by $m(G)$. We remove the argument $G$ when it is clear from the context. If $X\subseteq V(G)$, then $G[X]$ denotes the subgraph of $G$ induced by $X$. 
For a natural number $k$, we set $[k] = \{1,\ldots, k\}$.

An \emph{independent set} is a set of vertices of $G$, no two of which are adjacent. The cardinality of a largest independent set is the {\em independence number} $\alpha(G)$ of $G$.

 The \emph{complete graph} (or \emph{clique}) $K_n$, $n\ge 1$, is the graph with $n$ vertices where each pair of distinct vertices are adjacent. A \emph{subcubic graph} is a graph where each vertex has degree at most $3$.

The distance between two vertices $u,v$ in a graph $G$ is denoted $d(u,v)$ and is the number of edges in a shortest $u,v$-path. The {\em diameter} of $G$ is the maximum distance between pairs of vertices of the graph.  

The \emph{Cartesian product} $G\times H$  of graphs $G$ and $H$ both have the vertex set $V(G)\times V(H)$. In $G\times H$, vertices $(g,h)$ and $(g',h')$ are adjacent if either $g=g'$ and $hh'\in E(H)$, or $h=h'$ and $gg'\in E(G)$. 
A {\em layer} in $G\times H$ is a subgraph induced by the vertices in which one of the coordinates is fixed. Note that each layer is isomorphic either to $G$ or $H$. 

If $G$ is a graph then, by definition $\mu(G) \ge \muo(G) \ge \mut(G)$ and $\mu(G) \ge \mud(G) \ge \mut(G)$.

\section{A polynomial-time algorithm for computing mutual-visibility sets}
\label{sec:Algo}

We consider graphs $G$ with $n \ge 7$ vertices as for graphs with at most $6$ vertices we can compute a $\mu$-set of maximum size in a brute-force manner. We denote by $\binom{V(G)}{2}$ the set of all unordered pairs of distinct vertices in $V(G)$.
For any $\{u,v\} \in \binom{V(G)}{2}$, consider a shortest path $\langle u, x_1, x_2, \dots, x_k, v \rangle$ from $u$ to $v$ in $G$ and let $B( \{ u,v \} ) = \{x_1, \dots, x_k \}$. Notice that it might be $k=0$, in which case $B(\{u,v\}) = \emptyset$.

We build a $3$-uniform hypergraph $H$ (i.e., a hypergraph in which each hyperedge contains exactly $3$ vertices) as follows: the set of vertices of $H$ is $V(G)$ and there exists a hyperedge $\{u, v, x\}$ for each $\{u,v\} \in \binom{V(G)}{2}$ and $x \in B( \{ u,v \} )$. See Figure~\ref{fig:hypergraph} for an example.

\begin{figure}[t]
   \centering
    \includegraphics[]{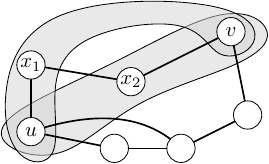}
    \caption{A sample graph $G$ and the hyperedges (shown as sets of vertices) added to the hypergraph $H$ for the pair of vertices $u,v$ when $\langle u, x_1, x_2, v\rangle$ is chosen as a shortest path.}
    \label{fig:hypergraph}
\end{figure}

An independent set of a hypergraph $H$ is a subset $S$ of vertices $V(G)$ such that, there exists no hyperedge $e$ such that $e \subseteq S$. 

Observe that an independent set $S$ of $H$ is a $\mu$-set of $G$. Indeed, for every two distinct vertices $u$ and $v$ such that $u,v \in S$, it must be the case that no vertex $x \in B(\{u,v\})$ is in $S$ as $H$ contains the hyperedge $\{u,x,v\}$.

We compute an independent set $S$ of $H$ using the algorithm of~\cite{CaroT91}. 
For an integer $\ell \ge 0$ and a real $r \ge 0$, we define $\binom{r}{\ell} = \frac{1}{\ell!} \prod_{i=0}^{\ell-1} (r-i)$.
It is known~\cite{ShachnaiS01} that $|S|$ can be lower bounded as:
\[
|S| \ge \sum_{v \in V(G)} \frac{1}{\binom{\delta(v) + \frac{1}{2}}{\delta(v)}} = \Theta\left( \sum_{v \in V(G)} \frac{1}{\sqrt{ 1 + \delta(v) }} \right),
\]
where $\delta(v)$ denotes the number of hyperedges incident to $v$ and we used $ \binom{\delta(v) + \frac{1}{2}}{\delta(v)} = \Theta( \sqrt{ 1 + \delta(v) } )$.

Let $m(H)$ be the number of hyperedges in $H$, and let $\overline{D} = \frac{2}{n(n-1)} \sum_{ \{u,v\} \in \binom{V(G)}{2}} d(u,v)$ be the average distance in $G$. Using the fact that $H$ is $3$-uniform we have:

\begin{align*}
    \sum_{v} \delta(v) &= \frac{1}{3} m(H) \le 
    \frac{1}{3} \sum_{\{u,v\} \in \binom{V(G)}{2}} |B( \{u,v\} )|=
    \frac{1}{3} \sum_{ \{u,v\} \in \binom{V(G)}{2}} \left(  d(u,v) - 1 \right)
    \\
    &= \frac{n(n-1)}{6} ( \overline{D} -1)
    \le \frac{n^2}{6} \overline{D} - n.
\end{align*}

Let $\varphi(x) = \frac{1}{\sqrt{1+x}}$ and notice that $\varphi$ is convex and monotonically decreasing for $x \ge 0$. By Jensen's inequality:
\[
    \sum_{v \in V(G)} \frac{1}{\sqrt{ 1 + \delta(v) }} = \sum_{v \in V(G)} \varphi(\delta(v)) \ge n \cdot \varphi\left( \frac{1}{n}  \cdot \sum_{v \in V(G)} \delta(v) \right)
    \ge n \cdot \varphi\left( \frac{n}{6} \overline{D} - 1 \right)
    = \sqrt{\frac{6n}{\overline{D}}},
\]
hence $|S| = \Omega\left(\sqrt{\frac{n}{\overline{D}}}\right)$. We have therefore shown:

\begin{theorem}\label{theo:approx}
    Given an input graph $G$ on $n$ vertices, it is possible to find, in polynomial time, a $\mu$-set of $G$ having size $\Omega\left( \sqrt{n/ \overline{D}} \right)$, where $\overline{D} = \frac{2}{n(n-1)} \sum_{ \{u,v\} \in \binom{V(G)}{2} } d(u,v)$ is the average distance in $G$.
\end{theorem}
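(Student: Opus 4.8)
The plan is to turn the visibility requirement into a covering constraint captured by the $3$-uniform hypergraph $H$ defined above, and then to harvest a large $\mu$-set from a large independent set of $H$. First I would record the reduction already observed before the statement: every independent set $S$ of $H$ is a $\mu$-set of $G$, because if $u,v \in S$ then no internal vertex $x \in B(\{u,v\})$ of the chosen shortest $u,v$-path can lie in $S$ (otherwise the hyperedge $\{u,x,v\}$ would be contained in $S$), so $u$ and $v$ are $S$-visible along that path. Thus it suffices to compute one large independent set of $H$, and since $H$ can be built in polynomial time by fixing, for each pair $\{u,v\}$, a single shortest path via a BFS from $u$, the whole procedure is polynomial.

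Second, I would invoke a degree-based lower bound for the independence number of a hypergraph. Running the greedy algorithm of~\cite{CaroT91} and using the Shachnai--Srinivasan analysis~\cite{ShachnaiS01} yields an independent set $S$ with $|S| \ge \sum_{v} \frac{1}{\binom{\delta(v)+\frac12}{\delta(v)}} = \Theta\!\left(\sum_v \frac{1}{\sqrt{1+\delta(v)}}\right)$, where $\delta(v)$ is the number of hyperedges through $v$. The task is then to lower bound this sum using only aggregate information about $H$, namely the total number of incidences. Counting hyperedges by the pairs that generate them and using $|B(\{u,v\})| = d(u,v)-1$, the number of hyperedges is governed by $\sum_{\{u,v\}} (d(u,v)-1)$, which by the definition of $\overline{D}$ is $\Theta(n^2 \overline{D})$; hence the average value of $\delta(v)$ over $v \in V(G)$ is $O(n\overline{D})$.

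The crux is the final step, where the per-vertex bound must be converted into a single clean estimate despite the unknown degree distribution. Here I would use that $\varphi(x) = \frac{1}{\sqrt{1+x}}$ is convex and monotonically decreasing on $[0,\infty)$: convexity lets Jensen's inequality push the average inside, giving $\sum_v \varphi(\delta(v)) \ge n\,\varphi\!\left(\tfrac1n \sum_v \delta(v)\right)$, and monotonicity means that upper bounding the average degree only strengthens the resulting lower bound. Substituting the $O(n\overline{D})$ estimate for the average degree then collapses to $|S| = \Omega(\sqrt{n/\overline{D}})$, which is exactly the claimed size. I expect this Jensen step to be the main obstacle, since it is what makes the argument robust against all degree sequences with a prescribed sum; the remaining ingredients (the reduction, the construction of $H$, and the incidence count) are routine once this convexity observation is in place. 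The small amount of bookkeeping needed to absorb lower-order terms into a clean bound on the average degree is where the standing assumption $n \ge 7$ (with graphs on at most $6$ vertices handled by brute force) comes into play.
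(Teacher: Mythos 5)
Your proposal is correct and follows essentially the same route as the paper's proof: the identical $3$-uniform hypergraph reduction (independent sets of $H$ are $\mu$-sets of $G$), the same Caro--Tuza/Shachnai--Srinivasan degree-based bound $|S| = \Theta\bigl(\sum_v 1/\sqrt{1+\delta(v)}\bigr)$, the same incidence count via $|B(\{u,v\})| = d(u,v)-1$ giving average degree $O(n\overline{D})$, and the same concluding application of Jensen's inequality to the convex decreasing function $\varphi(x) = 1/\sqrt{1+x}$. There is nothing to correct.
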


\section{Inapproximability of visibility problems on graphs of diameter 2}\label{sec:InapVis}

In this section we show inapproximability results for the problem of computing $\tau$-set of maximum size, where $\tau \in \{\mu, \mud, \muo, \mut\}$, 
for graphs with diameter $2$ via reductions from the \mis problem.

Given an undirected graph $H$, the \mis problem asks to compute an independent set of $H$ of maximum cardinality.

    The \mis problem on subcubic graphs is not approximable in polynomial time within a factor of $c'$ for a suitable constant $c' > 1$, unless $\mathsf{P}=\mathsf{NP}$, see~\cite{hardnessCubic}. Moreover, for general graphs, it cannot be approximated  within a factor of $O(n(H)^{1-\delta})$ in polynomial time, for any constant $\delta>0$, see~\cite{Zuckerman07}.
    
    The reduction we are going to show is from the \mis problem on connected graphs.
    Given a connected graph $H$ with $n(H) \ge 3$ vertices  which is the input instance of the \mis problem, we construct the graph $G$ as a function of $H$ and an additional integer parameter $L \ge 1$. 
    The set of vertices of $G$ consists of  the union of (i) all edges in $E(H)$, (ii) $L$ copies $v_1, \dots, v_L$ of each vertex $v \in V(H)$, and (iii) two new vertices $y$ and $z$.
    The set of edges of $G$ contains (i) all edges $(u_1, v_1), \dots, (u_L, v_L)$ for each pair of distinct vertices $u,v \in V(H)$  (ii) all edges $(e, v_1), \dots, (e, v_L)$ for each $e \in E(H)$ and each $v \in V(H)$ such that $v$ is an endvertex of $e$, (iii) all edges $(y, v_1), \dots, (y, v_L)$ for all $v \in V(H)$, and finally (iv) all edges $(z, e)$ for $e \in E(H)$.
Observe that $G$ has diameter $2$. \Cref{fig:H2G} shows the construction of a graph $G$ corresponding to a particular graph $H$.

\begin{figure}[t]
   \centering
   \def\svgwidth{1\columnwidth}
     \large\scalebox{1}{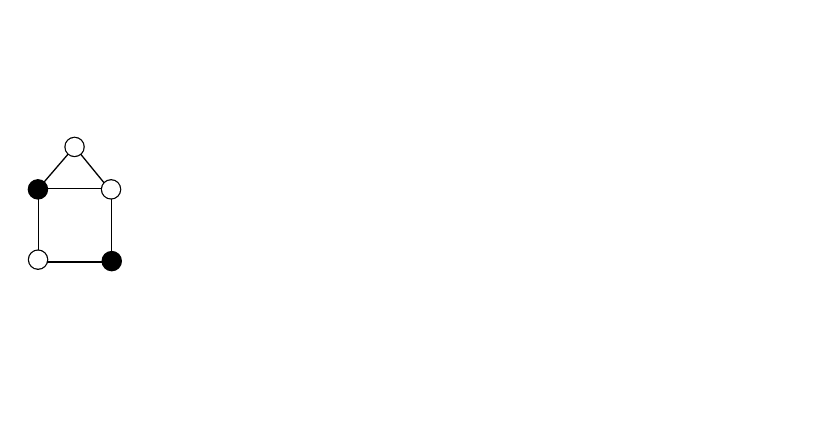}
    \caption{Construction of graph $G$ from graph $H$; $L=2$. Vertices inside the dashed ellipses induce complete graphs.}
    \label{fig:H2G}
\end{figure}

We first present some technical lemmas that will be instrumental to proving our  \linebreak inapproximability results.

    \begin{lemma}
        \label{label:mu-set-to-is}
        Consider the graph $G$ obtained from $H$ when $L=1$.
        Given a $\mu$-set $M$ of $G$, it is possible to compute, in polynomial time, an independent set $S$ of $H$ of size at least $|M| - m(H) - 4$.        
    \end{lemma}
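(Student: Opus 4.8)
The plan is to turn the $\mu$-set $M$ directly into an independent set of $H$, thereby establishing $\alpha(H) \ge |M| - m(H) - 4$. I would first classify the vertices of $M$ according to the three types of vertices of $G$: let $M_V = M \cap \{v_1 : v \in V(H)\}$ be the vertex-copies in $M$, let $W = \{v \in V(H) : v_1 \in M\}$ be the corresponding set of vertices of $H$, let $M_E = M \cap E(H)$ be the edge-vertices in $M$, and note that $M$ may additionally contain $y$ and/or $z$. Since $|M_E| \le m(H)$ and at most two vertices of $M$ lie in $\{y,z\}$, we immediately get $|W| = |M_V| \ge |M| - m(H) - 2$. The difficulty is that $W$ need not be independent in $H$, because two copies $u_1,v_1$ with $uv \in E(H)$ are adjacent in $G$ and hence trivially $M$-visible; so I cannot simply output $W$.

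The key structural step, which I expect to be the crux of the proof, is the following claim: if $|W| \ge 3$, then no edge-vertex $e = uv$ with both endpoints $u,v \in W$ can belong to $M$. To prove it I would fix such an $e$ together with a third vertex $c \in W$ with $c \notin \{u,v\}$ (which exists because $|W| \ge 3$), and examine the pair $e, c_1 \in M$. Since $c$ is not an endpoint of $e$, the vertices $e$ and $c_1$ are non-adjacent, and the neighbourhood of $e$ in $G$ is exactly $\{u_1, v_1, z\}$, of which only $u_1$ and $v_1$ are adjacent to $c_1$ (the vertex $z$ is adjacent only to edge-vertices). Hence $d_G(e,c_1) = 2$ and every shortest $e,c_1$-path has an internal vertex in $\{u_1,v_1\} \subseteq M$, contradicting the $M$-visibility of $e$ and $c_1$ and proving the claim.

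Granting the claim, I would finish by a counting argument. Writing $k$ for the number of edges of $H$ with both endpoints in $W$ (i.e.\ $k = |E(H[W])|$), the claim gives $M_E \subseteq \{e \in E(H) : e \text{ has at most one endpoint in } W\}$, so $|M_E| \le m(H) - k$ and therefore $|M| \le |W| + (m(H) - k) + 2$, which rearranges to $|W| - k \ge |M| - m(H) - 2$. Since $H[W]$ is a graph on $|W|$ vertices with $k$ edges, greedily deleting one endpoint of each edge yields, in polynomial time, an independent set $S \subseteq W$ of $H$ with $|S| \ge |W| - k \ge |M| - m(H) - 2 \ge |M| - m(H) - 4$.

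It remains to dispose of the case $|W| \le 2$, where the claim is vacuous. Here $|M| \le |W| + |M_E| + 2 \le 2 + m(H) + 2 = m(H) + 4$, so the required lower bound $|M| - m(H) - 4$ is non-positive and the empty set is an independent set of the desired size. Combining the two cases gives the lemma, and every step---classifying $M$, the greedy deletion, and the case check---is clearly computable in polynomial time.
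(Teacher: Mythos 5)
Your proof is correct and takes essentially the same approach as the paper's: the identical key visibility fact (if $e=(u,v)\in M$ with $u_1,v_1\in M$, then $e$ and a third vertex-copy $c_1\in M$ have only the two shortest paths $\langle e,u_1,c_1\rangle$ and $\langle e,v_1,c_1\rangle$, both blocked), the same case split at $|W|\le 2$ versus $|W|\ge 3$, and the same repair step of deleting one endpoint per offending edge. The only difference is bookkeeping: the paper deletes an endpoint of every edge \emph{not} in $M$ and recovers the count via $|M\cap E(H)|$, whereas you first isolate the claim that no edge inside $W$ lies in $M$ and then delete endpoints only within $H[W]$ --- both yield the same bound $|M|-m(H)-2$ in the main case, which is even slightly stronger than stated.
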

    \begin{proof}
        Let $S'$ be the set of all vertices $v \in V(H)$ such that $v_1 \in M$.
        If $|S'| \le 2$ we choose $S = \emptyset$, and the claim follows from $|M| \le m(H) + |S'| + 2 \le m(H) + 4$, i.e., $|S| = 0 \ge |M| - m(H) - 4$. Hence, in the rest of the proof we only consider the case $|S'| \ge 3$.

        Choose an arbitrary end vertex $r_e$ of each edge $e = (u,v) \in E(H)$ such that $e \not\in M$ and define 
        $R = \{ r_e \mid e \in V(H) \setminus M \}$.
        We choose $S = S' \setminus R$.

        To see that $S$ is an independent set of $H$, consider an arbitrary edge $e = (u,v) \in E(H)$, let $w \in S' \setminus \{u,v\}$
        and notice that it cannot be the case that $u,v \in S$ since this would imply $e \in M$ and $e$ would not be in mutual visibility with $w_1$ in $G$. Indeed, the only two shortest paths connecting $w_1$ to $e$ are $\langle w_1,u,e\rangle$ and $\langle w_1,v,e\rangle$.

        Using $|S'| + |M \cap E(H)| \ge |M| - 2$, we can write:
        \begin{align*}
            |S| & \ge |S'| - |R| \ge |S'| - | E(H) \setminus M | 
            = |S'| - | E(H) \setminus (M \cap E(H)) | \\
            &= |S'| - ( m(H)  -  |M \cap E(H)| ) 
            =  |S'| + |M \cap E(H)| - m(H)
            \ge |M| - 2 - m(H). \tag*{\qedhere}
        \end{align*}
    \end{proof}

    \begin{lemma}
        \label{lemma:total_set_to_is}
        Given a $\mut$-set $X$ of $G$, it is possible to compute, in polynomial time, an independent set $S$ of $H$ of size at least $\frac{|X| - m(H) - 2}{L}$.       
    \end{lemma}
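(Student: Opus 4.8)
The plan is to decompose the vertex-copies contained in $X$ according to their layer and to show that the copies inside each single layer already form an independent set of $H$; the best layer will then meet the claimed bound by an averaging argument. Concretely, for each $i \in [L]$ I define $S_i = \{ v \in V(H) : v_i \in X \}$, i.e.\ $S_i$ collects those vertices of $H$ whose $i$-th copy belongs to $X$. All the $S_i$ are computable in polynomial time from $X$, and the set I will ultimately return is the largest among them.

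The crucial step is to prove that every $S_i$ is an independent set of $H$. Fix an edge $e=(u,v)\in E(H)$ and a layer $i\in[L]$; I claim that $u$ and $v$ cannot both lie in $S_i$, equivalently that $u_i$ and $v_i$ cannot both belong to $X$. Since $n(H)\ge 3$, there is a third vertex $w\in V(H)\setminus\{u,v\}$, and I will exploit the pair $e, w_i$ of $G$. The geometric fact to verify is that $d(e,w_i)=2$ and that the only common neighbours of $e$ and $w_i$ -- hence the only possible internal vertices of a shortest $e,w_i$-path -- are exactly $u_i$ and $v_i$. Indeed, the neighbours of the edge-vertex $e$ are $z$ together with the endpoint-copies $\{u_j,v_j : j\in[L]\}$, whereas the neighbours of $w_i$ are $y$, its layer-$i$ clique-mates $\{x_i : x\in V(H),\, x\ne w\}$, and the edge-vertices incident to $w$; intersecting these two neighbourhoods leaves precisely $\{u_i,v_i\}$. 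Because $X$ is a $\mut$-set, the pair $e,w_i$ must be $X$-visible, so some shortest $e,w_i$-path has an internal vertex outside $X$; this forces $u_i\notin X$ or $v_i\notin X$, which is exactly the independence constraint for the edge $e$ in $S_i$.

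It then remains to combine the layers. Each vertex-copy of $X$ lies in exactly one $S_i$, so $\sum_{i=1}^{L}|S_i|$ equals the number of vertex-copies in $X$, namely $|X| - |X\cap E(H)| - |X\cap\{y,z\}| \ge |X| - m(H) - 2$. By pigeonhole there is an index $i^\star$ with $|S_{i^\star}| \ge \frac{1}{L}\bigl(|X|-m(H)-2\bigr)$, and returning $S=S_{i^\star}$ proves the statement.

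I expect the only delicate point to be the shortest-path analysis of the second paragraph: one must confirm that no route through $y$ or $z$, nor through another edge-vertex, offers an alternative length-$2$ path between $e$ and $w_i$, since this is what pins the internal vertices down to $u_i$ and $v_i$ and thereby yields the within-layer independence. The hypothesis $n(H)\ge 3$ is essential, as it is precisely what guarantees the witness vertex $w$ used to activate this constraint.
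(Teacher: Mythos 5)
Your proof is correct and follows essentially the same route as the paper's: you define the layer sets $S_i = \{v \in V(H) : v_i \in X\}$, use total visibility of the pair $e, w_i$ (whose only shortest paths pass through $u_i$ and $v_i$) to show each $S_i$ is independent, and conclude by averaging over the $L$ layers. Your explicit verification that the common neighbourhood of $e$ and $w_i$ is exactly $\{u_i, v_i\}$ just spells out a step the paper asserts directly.
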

    \begin{proof}
        Consider any $e=(u,v) \in E(H)$, and let $w \in V(H) \setminus \{u,v\}$. For any $i \in [L]$, the only two shortest paths between $w_i$ and $e$ in $G$ are $\langle w_i, u_i, e \rangle$ and $\langle w_i, v_i, e \rangle$, hence at least one of $u_i$ and $v_i$ is not in $X$.

        Let $S_i$ be the set containing all vertices $v \in V(H)$ such that $v_i \in X$. By the above discussion we have that each $S_i$, with $i \in [L]$, is an independent set of $H$ and
        we choose $S$ as any of the sets $S_i$ of maximum cardinality.
        Since $\sum_{i=1}^L |S_i| \ge |X| - m(H) - 2$, we have:
        \begin{equation*}            
           |S| = \max_{i\in[L]} |S_i| \ge \frac{1}{L} \sum_{i=1}^L |S_i| \ge \frac{|X| - m(H) - 2}{L}. \tag*{\qedhere}
        \end{equation*}
    \end{proof}

    \begin{lemma}
        \label{lemma:is_to_total}
        $\mu_t(G) \ge L \cdot \alpha(H) + m(H)$.
    \end{lemma}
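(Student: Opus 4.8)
The plan is to exhibit an explicit $\mut$-set $X$ of the stated size and then verify the total-visibility property directly. Let $A$ be a maximum independent set of $H$, so that $|A| = \alpha(H)$, and define
\[
X = \{\, v_i : v \in A,\ i \in [L] \,\} \cup E(H),
\]
where each edge of $H$ is identified with the corresponding vertex of $G$. The two parts of this union are disjoint and have sizes $L\cdot\alpha(H)$ and $m(H)$, so $|X| = L\cdot\alpha(H) + m(H)$. It then suffices to show that $X$ is a $\mut$-set, i.e.\ that \emph{every} pair of vertices of $G$ is $X$-visible. Since $G$ has diameter $2$, any two adjacent vertices are trivially $X$-visible (the connecting edge has no internal vertex), so the whole task reduces to proving that every pair of vertices at distance exactly $2$ has a common neighbour lying \emph{outside} $X$. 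Note that the vertices outside $X$ are precisely $y$, $z$, and the copies $w_j$ with $w \in V(H)\setminus A$ and $j\in[L]$, and these are exactly the connectors I intend to use.

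First I would dispose of the routine cases. The vertex $z$ is adjacent to every edge-vertex, so any two edge-vertices, and any pair $\{y,e\}$ with $e\in E(H)$, are joined through $z\notin X$; symmetrically, $y$ is adjacent to every copy, so any two copies in distinct layers, and any pair $\{z,w_j\}$, are joined through $y\notin X$. The pair $\{y,z\}$ is itself adjacent: the neighbourhoods of $y$ (all copies) and $z$ (all edge-vertices) are disjoint, so were $yz$ a non-edge their distance would exceed $2$, contradicting $\operatorname{diam}(G)=2$; this adjacency is precisely what makes the two families above work. The genuinely substantive case is a pair $\{v_i,e\}$ with $e=(a,b)\in E(H)$ and $v\notin\{a,b\}$, which is where the independence of $A$ is used: the common neighbours of $v_i$ and $e$ are exactly the copies $a_i$ and $b_i$ (adjacent to $v_i$ because each layer induces a clique, and to $e$ as endpoints of $e$). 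Since $ab\in E(H)$ and $A$ is independent, at least one endpoint, say $a$, lies outside $A$, whence $a_i\notin X$ is the desired common neighbour.

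Collecting the cases shows that every distance-$2$ pair of $G$ is $X$-visible, so $X$ is a $\mut$-set and $\mut(G)\ge|X|=L\cdot\alpha(H)+m(H)$, as claimed. I expect the only real obstacle to be making the case analysis provably exhaustive over all vertex types while ensuring that no pair is forced to rely on an edge-vertex (which is always in $X$) or on a copy of an $A$-vertex as its sole connector; the single case carrying content is $\{v_i,e\}$, where independence of $A$ guarantees an $A$-free endpoint, and every other pair is routed through $y$ or $z$, both of which are kept out of $X$.
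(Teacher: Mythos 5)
Your proposal is correct and takes essentially the same approach as the paper: you take $X$ to be the $L$ copies of a maximum independent set of $H$ together with all edge-vertices, and verify total visibility by the same case analysis (routing through $y$ and $z$, with the only substantive case being $\{v_i,e\}$ for $v\notin e$, resolved by the independence of the set). You are also right to flag the adjacency of $y$ and $z$: the edge $(y,z)$ is missing from the paper's explicit edge list, yet both the diameter-$2$ claim and the paper's own proof (which uses the paths $\langle v_i, y, z\rangle$ and $\langle e, z, y\rangle$) implicitly rely on it, so your inference matches the intended construction.
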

    \begin{proof}
        Let $S$ be a maximum independent set of $H$.
        We define $X$ as the subset of vertices of $G$ that contains all $e \in E(H)$ plus all copies $v_1, \dots, v_L$ for each $v \in S$.
        Clearly, $|X| = L \cdot |S| + m(H) = L \cdot \alpha(H) + m(H)$.
        We now argue that $X$ is a $\mut$-set of $G$.

        Each vertex $v_i \in V(G)$ for $v \in V(H)$ and $i \in [L]$ is in $X$-visibility with $y$ (via the edge $(u,y)$), with $z$ (via the shortest path $\langle u, y, z \rangle)$, with any other vertex $u_i$ for $u \in V(H)$ (via the shortest path consisting of the sole edge $(v_i,u_i)$), and with any other vertex $u_j$ with $u \in V(H)$ and $j \in [L] \setminus \{i\}$ (via the shortest path $\langle v_i, y, u_j \rangle$).
    
        Similarly, each vertex $e \in E(H)$ is in $X$-visibility with $z$ (via the edge $(e, z)$), with $y$ (via the shortest path $\langle e, z, y \rangle)$ and with any other vertex $f \in E(H)$ (via the shortest path $\langle e, z, f \rangle$).

        It only remains to argue that each vertex $x_i$ with $x \in V(H)$ and $i \in [L]$ is in $X$-visibility with any other vertex $e = (u,v) \in E(H)$. 
        If $x \in \{u, v\}$ this is clearly the case, since $G$ contains the edge $(x_i,e)$. If $x \not\in \{u, v\}$ then let $w$ be an arbitrary vertex in $\{ u, v \} \setminus S$ (such a $w$ exists since $S$ is an independent set of $H$) and observe that $\langle x_i, w_i, e \rangle$ is a shortest path between $x_i$ and $e$ in $G$ with $w_i \not\in X$. 
    \end{proof}
    We can now prove the two main inapproximability results of this section.
    \begin{theorem}\label{tau-apx}
        For every $\tau \in \{\mu, \mud, \muo, \mut\}$, the problem of computing a maximum-size $\tau$-set of an input graph $G$ with diameter $2$ is $\mathsf{APX}$-Hard.
    \end{theorem}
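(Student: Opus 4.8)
The plan is to give an approximation-preserving (L-)reduction from the \mis problem on connected \emph{subcubic} graphs, which is APX-hard, to the problem of computing a maximum $\tau$-set, handling all four variants $\tau \in \{\mu,\mud,\muo,\mut\}$ uniformly. I would use the construction of $G$ from $H$ with the smallest parameter $L=1$, so that all three preceding lemmas apply to the same diameter-$2$ graph.

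The first step is to sandwich every visibility number between $\alpha(H)+m(H)$ and $\alpha(H)+m(H)+4$. For the lower bound, \Cref{lemma:is_to_total} with $L=1$ gives $\mut(G)\ge \alpha(H)+m(H)$, and since $\mu(G)\ge\muo(G)\ge\mut(G)$ and $\mu(G)\ge\mud(G)\ge\mut(G)$, the same bound holds for every $\tau(G)$. For the upper bound, I would observe that every $\mud$-set and every $\muo$-set is in particular a $\mu$-set, so by \Cref{label:mu-set-to-is} any such set $M$ yields an independent set of $H$ of size at least $|M|-m(H)-4\le\alpha(H)$, giving $\mu(G),\mud(G),\muo(G)\le \alpha(H)+m(H)+4$; for $\mut$, \Cref{lemma:total_set_to_is} with $L=1$ gives the even tighter $\mut(G)\le \alpha(H)+m(H)+2$. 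Hence $\alpha(H)+m(H)\le \tau(G)\le \alpha(H)+m(H)+4$ for all four variants.

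Next I would set up the reduction. Given any $\tau$-set $X$ of $G$, the back-map $g$ produces an independent set $S$ of $H$ with $|S|\ge |X|-m(H)-4$: for $\tau\in\{\mu,\mud,\muo\}$ this is \Cref{label:mu-set-to-is} applied to $X$ viewed as a $\mu$-set, and for $\tau=\mut$ it is \Cref{lemma:total_set_to_is}. To verify the L-reduction conditions I would exploit that $H$ is subcubic: then $m(H)\le \tfrac{3}{2}n(H)$ and $\alpha(H)\ge \tfrac{1}{4}n(H)$, so $m(H)\le 6\,\alpha(H)$ and (as $\alpha(H)\ge 1$) $\tau(G)\le \alpha(H)+m(H)+4\le 11\,\alpha(H)$, which is the first inequality with constant $a=11$. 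For the second inequality the sandwich yields $\alpha(H)-|S|\le \big(\tau(G)-|X|\big)+4$, bounding the optimality gap of the \mis solution by that of the $\tau$-solution up to the additive constant $4$.

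The main obstacle is precisely this additive slack together with the fact that the shared term $m(H)$ dilutes the multiplicative gap: a $(1+\varepsilon)$-approximate $\tau$-set only guarantees $|S|\ge \tfrac{\alpha(H)+m(H)}{1+\varepsilon}-m(H)-4$, i.e. $|S|\ge \tfrac{\alpha(H)(1-6\varepsilon)-4(1+\varepsilon)}{1+\varepsilon}$ after substituting $m(H)\le 6\alpha(H)$. Here the subcubic restriction is essential: since $m(H)=O(\alpha(H))$ and the APX-hardness of subcubic \mis may be taken on instances with $\alpha(H)=\Theta(n(H))\to\infty$, the additive constant becomes negligible and the dilution factor remains a fixed constant strictly below $1$. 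Consequently a PTAS for any $\tau$-set problem would yield a PTAS for subcubic \mis, contradicting $\mathsf{P}\neq\mathsf{NP}$, which establishes the claimed APX-hardness on diameter-$2$ graphs for all four variants. (For $\mut$ alone one could instead take $L$ large so that $L\cdot\alpha(H)$ dominates $m(H)$ and obtain a cleaner gap, but $L=1$ already suffices.)
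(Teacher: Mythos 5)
Your proposal is correct and follows essentially the same route as the paper: the same $L=1$ construction applied to connected subcubic \mis instances, with \Cref{lemma:is_to_total} giving the lower bound $\tau(G)\ge\alpha(H)+m(H)$ for all four variants, \Cref{label:mu-set-to-is} (resp.\ \Cref{lemma:total_set_to_is}) as the back-map, and the subcubic bounds $m(H)\le\tfrac{3}{2}n(H)$ and $\alpha(H)\ge\tfrac{n(H)}{4}$ to control the dilution caused by the shared $m(H)$ term. The only difference is presentational: you package the argument as an (almost) L-reduction whose additive constant is absorbed on large instances, whereas the paper explicitly converts any $c$-approximation for the $\tau$-set problem into a $3c$-approximation for subcubic \mis and invokes the known constant-factor inapproximability.
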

    \begin{proof}
        Let $c \in \left(1, 1+ \frac{1}{12} \right]$ be a constant whose exact value will be given later.
        We show how to transform any $c$-approximation algorithm $A$ for the problem of computing a $\tau$-set into a $3c$-approximation algorithm for the minimum independent set problem on subcubic graphs. 

        Given an instance $H$ of the \mis problem, where $H$ is a subcubic graph, we can assume w.l.o.g.\ that $H$ is connected (otherwise we can apply the following arguments on each connected component of $H$), that $m(H) \ge n(H)$ (otherwise $H$ is a tree and we can find a maximum independent set in polynomial time), and that $n(H) \ge 104$ (otherwise we can find a maximum independent set in constant time by brute force).
        
        Since $H$ is subcubic, we must have $\alpha(H) \ge \frac{n(H)}{4}$.\footnote{A independent set $S$ of  $H$ having size at least $\frac{n(H)}{4}$ can be computed by the greedy algorithm that starts from $S = \emptyset$ and iteratively (i) adds an arbitrary vertex $v$ to $S$, and (ii) deletes $v$ and all its (at most $3$) neighbors form $H$ until no vertices are left.}
        We apply our reduction to $H$ with $L=1$ to obtain a graph $G$.
        From \Cref{lemma:is_to_total} we know that an optimal $\tau$-set for $G$ has size at least $\alpha(H) + m(H)$, and hence the $c$-approximate solution $M$ computed by running algorithm $A$ on $G$ has size at least $\frac{\alpha(H) + m(H)}{c}$.
        Since $M$ is always a mutual-visibility set, \Cref{label:mu-set-to-is} allows us to compute, in polynomial time, an independent set $S$ of $H$ that satisfies:
        \[
                |S| \ge |M| - m(H) - 4 
                \ge \frac{\alpha(H) + m(H)}{c} - m(H) - 4
                = \frac{\alpha(H) - (c-1)m(H) - 4c}{c}.
        \]

        Such a set $S$ is exactly the one returned by our approximation algorithm for the independent set problem on subcubic graphs. The achieved approximation ratio is:
        \begin{multline*}
            \frac{\alpha(H)}{|S|} \le  \frac{c \alpha(H)}{\alpha(H) - (c-1) m(H) - 4c}
            = c \left(1 + \frac{(c-1)m(H)+4c}{\alpha(H) - (c-1)m(H)-4c}\right) \\
             \le c \left(1 + \frac{(c-1) \frac{3 n(H)}{2} +4c}{\frac{n(H)}{4} - (c-1)\frac{3n(H)}{2}-4c}\right)
            \le c \left(1 + \frac{n(H) + \frac{104}{3}}{n(H) - \frac{104}{3}}\right) 
            \le 3c,
        \end{multline*}
        where we used $m(H) \le \frac{3n(H)}{2}$ (since $H$ is subcubic), $c \le \frac{13}{12}$, and the fact that $\frac{x + \frac{104}{3}}{x - \frac{104}{3}} \le 2$ for all $x \ge 104$.

        As the \mis problem is not approximable in polynomial time within a factor of $c'$ for a suitable constant $c' > 1$, unless $\mathsf{P}=\mathsf{NP}$, the claim follows by choosing $c = \min\left\{ 1 + \frac{1}{12},  \frac{c'}{3} \right\}$. 
    \end{proof}

    \begin{theorem}\label{mut-diam2}
        The problem of computing a maximum-size $\mut$-set of an input graph $G$ having diameter $2$ is not approximable within $n^{\frac{1}{3}-\epsilon}$, for any constant $\varepsilon>0$, unless $\mathsf{P} = \mathsf{NP}$.
    \end{theorem}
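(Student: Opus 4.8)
The plan is to reuse the very same reduction $H \mapsto G$ underlying \Cref{tau-apx}, but now instantiated with a polynomially large value of the parameter $L$, and to drive it with the strong inapproximability of \mis on general graphs: by~\cite{Zuckerman07}, \mis cannot be approximated within $n(H)^{1-\delta}$ for any constant $\delta > 0$ unless $\mathsf P = \mathsf{NP}$. The two structural lemmas already pin down $\mut(G)$ almost exactly in terms of $\alpha(H)$: combining \Cref{lemma:is_to_total} with \Cref{lemma:total_set_to_is} (applied to an optimal $\mut$-set) yields
\[
L\cdot\alpha(H) + m(H) \;\le\; \mut(G) \;\le\; L\cdot\alpha(H) + m(H) + 2,
\]
so that $\mut(G) = L\cdot\alpha(H) + m(H) + O(1)$. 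The role of $L$ is to make the term $L\cdot\alpha(H)$ dominate the additive offset $m(H)$, thereby turning a multiplicative approximation of $\mut(G)$ into a multiplicative approximation of $\alpha(H)$, while keeping $n(G)$ polynomially bounded so that the exponent works out to $1/3$.

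Concretely, I would set $L = m(H)$ (recall $\alpha(H)\ge 1$, so this guarantees $L\cdot\alpha(H)\ge m(H)$ even on the hardest instances), giving $\mut(G) = m(H)\big(\alpha(H)+1\big) + O(1)$ and $n(G) = m(H) + L\cdot n(H) + 2 = \Theta\big(m(H)\,n(H)\big)$. Since $m(H)\le\binom{n(H)}{2}$, this is $O\big(n(H)^3\big)$, hence $n(H) = \Omega\big(n(G)^{1/3}\big)$; the reduction is clearly polynomial and $G$ has diameter $2$. Now, given a hypothetical polynomial-time $\rho$-approximation algorithm for $\mut$ with $\rho = n(G)^{1/3-\varepsilon}$, I run it on $G$ to obtain a $\mut$-set $X$ with $|X|\ge \mut(G)/\rho$ and then extract an independent set $S$ of $H$ through \Cref{lemma:total_set_to_is}; plugging in $L = m(H)$ gives $|S| \ge \frac{|X|-m(H)-2}{m(H)} \ge \frac{\alpha(H)+1}{\rho} - O(1)$.

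The one delicate point is the additive $O(1)$ loss, which is harmless when $\alpha(H)$ is large but could swamp the bound when $\alpha(H)=O(\rho)$. I would dispose of it by also keeping the trivial single-vertex solution and returning the larger set: a short case analysis then shows that the returned set certifies an approximation ratio of $O(\rho)$ for \mis regardless of the size of $\alpha(H)$ (either $|S|\ge\tfrac{\alpha(H)}{O(\rho)}$, or $\alpha(H)=O(\rho)$ and the single vertex already suffices). Translating back, $\rho = n(G)^{1/3-\varepsilon} = O\big(n(H)^{(1/3-\varepsilon)\cdot 3}\big) = O\big(n(H)^{1-3\varepsilon}\big)$, so we would obtain a polynomial-time $O\big(n(H)^{1-3\varepsilon}\big)$-approximation for \mis, contradicting~\cite{Zuckerman07} for, say, $\delta = 2\varepsilon$ (the hidden constants and the $O(1)$ offsets being absorbed since $n(H)^{\varepsilon}$ dominates them for large $n(H)$). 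The main obstacle to get right is the joint calibration of $L$ against the growth of $n(G)$: $L$ must be large enough that $L\cdot\alpha(H)$ dominates $m(H)$ for \emph{every} instance, yet small enough that $n(G)$ stays within $n(H)^{3}$, which is exactly what fixes the attainable exponent at $1/3$; the offset handling via the trivial solution is the second, more routine, technical care.
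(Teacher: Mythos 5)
Your proposal is correct and follows essentially the same route as the paper: the identical reduction with a polynomially large $L$, the same two lemmas (\Cref{lemma:is_to_total} and \Cref{lemma:total_set_to_is}), the bound $n(G)=O(n(H)^3)$ fixing the exponent $1/3$, and the inapproximability of \mis from~\cite{Zuckerman07}. The only differences are cosmetic: you take $L=m(H)$ where the paper takes $L=n(H)^2$ (both yield $n(G)=O(n(H)^3)$, and only this upper bound is needed), and your explicit fallback to a trivial single-vertex solution when $\alpha(H)=O(\rho)$ is in fact slightly more careful than the paper's handling of the additive loss.
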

    \begin{proof}
        Let $c>0$ be a constant of choice. We show how to transform any $n(G)^c$-approximation algorithm $A$ for the problem of computing a $\mut$-set of a graph $G$ with $n(G)$ vertices into a $O(n(H)^{3c})$-approximation algorithm for the problem of computing minimum independent of a graph $H$ with $n(H)$ vertices.

        Consider an instance $H$ of the \mis problem, let $n(H)$ (resp.\ $m(H)$) be the number of vertices (resp.\ edges) of $H$, assume w.l.og.\ that $n(H) \ge 3$ (otherwise a maximum independent set of $H$ can be found in constant time by brute force), and apply our reduction with $L=n(H)^2$ to construct $G$.
        Note that $n(G) = L \cdot n(H) + m(H) + 2 = \Theta(n(H)^3)$.

        From \Cref{lemma:is_to_total}, we have $\mut(G) \ge n(H)^2 \alpha(H) + m(H)$, hence the $\mut$-set $X$ computed by running algorithm $A$ on $G$ has size at least $\frac{n(H)^2 \alpha(H) + m(H)}{n(G)^c}$.

        As shown by \Cref{lemma:total_set_to_is}, we can convert $X$ into an independent set $S$ of $H$ having size $\frac{|X|- m(H)-2}{n(H)^2}$. Then:
        \begin{align*}
            |S| &\ge  \frac{|X|- m(H)-2}{n(H)^2}
            \ge \frac{\frac{n(H)^2 \alpha(H)}{n(G)^c} - m(H) - 2}{n(H)^2}
            = \frac{\alpha(H)}{n(G)^c} - \frac{m(H) + 2}{n(H)^2} \\
            &\ge \frac{\alpha(H)}{ \Theta( n(H)^{3c} )} - 1
            = O\left( \frac{\alpha(H)}{n(H)^{3c}} \right).
        \end{align*}

        Hence, for any constant $\varepsilon>0$, no polynomial-time $n(G)^{\frac{1}{3}-\epsilon}$-approximation algorithm can exist for the problem of computing a maximum $\mut$-set, unless  $\mathsf{P} = \mathsf{NP}$, since it would imply the existence of a polynomial-time $O(n(H)^{1-\varepsilon/3})$-approximation algorithm for the \mis problem. 
        \end{proof}

\section{Inapproximability of visibility problems on graphs of diameter 3}\label{sec:inapprox3}

In this section we show stronger inapproximability results of visibility problems for graphs of diameter 3. More precisely, given a graph $G$ of $n$ vertices and diameter of at least 3, we show that, for every $\tau \in \{\mu,\mud,\muo,\mut\}$ and every constant $\varepsilon >0$, it is not possible to design a polynomial-time algorithm that computes a $\tau$-set whose size approximates the value of $\tau(G)$ within a factor of $n^{1/(\beta+1)-\varepsilon}$, where $\beta=2$ if $\tau \in \{\mu,\mud\}$ and $\beta=1$ if $\tau \in \{\muo,\mut\}$, unless $\mathsf{P} = \mathsf{NP}$. 

Given a graph $H$ of $N$ vertices and a clique $K_L$ of $L\geq 1$ vertices, consider the graph $G=K_L \times H$. 
We denote by $H_i$ the layer consisting in the $i$-th copy of $H$ in $K_L \times H$ and we denote by $v_i$ the copy of $v \in V(H)$ that belongs to the layer $H_i$.

We prove useful connections between $\tau$-sets in $K_L \times H$ and independent sets in $H$.

\begin{lemma}\label{lemma:property_cartesian_product_clique_graph}
Given a $\tau$-set $X$ for $K_L \times H$, with $\tau \in\{\mu,\mud,\muo,\mut\}$, where $H$ is a graph with $N$ vertices and $L\geq 1$, we can find in polynomial time a subset $X'\subseteq X$ that satisfies the following two conditions:
\begin{itemize}
    \item for every $1\in [L]$, $X' \cap V(H_i)$ is an independent set of $H_i$;
    \item $|X'|\geq |X|-N^\beta$, where $\beta=2$ if $\tau \in\{\mu, \mud\}$ and $\beta = 1$ if $\tau\in\{\muo, \mut\}$.
\end{itemize} 
\end{lemma}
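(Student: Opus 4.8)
The plan is to translate the visibility requirements in $G = K_L \times H$ into purely combinatorial constraints on how the vertices of $X$ distribute among the layers, and then to delete a few vertices to destroy every intra-layer edge. First I would record the metric structure of the product: since $K_L$ is complete, $d_G(u_i, v_j) = d_{K_L}(i,j) + d_H(u,v)$, so two vertices in the same layer $H_i$ are joined only by shortest paths that stay inside $H_i$ and project to shortest $u$–$v$ paths of $H$. The crucial observation concerns an edge $uv \in E(H)$ and two distinct layers $i \ne j$: the vertices $v_i$ and $u_j$ lie at distance $2$, and their only two shortest paths are $\langle v_i, u_i, u_j\rangle$ and $\langle v_i, v_j, u_j\rangle$, whose unique internal vertices are $u_i$ and $v_j$. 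Hence $v_i$ and $u_j$ are $X$-visible if and only if $u_i \notin X$ or $v_j \notin X$. For each $v \in V(H)$ I set $A_v = \{\, i \in [L] : v_i \in X \,\}$, so an edge $uv$ of $H$ appears inside layer $i$ (i.e.\ $u_i, v_i \in X$) exactly when $i \in A_u \cap A_v$.

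Next I would turn the $\tau$-property into a bound on $A_u \cap A_v$ for every $uv \in E(H)$. For $\tau \in \{\mu, \mud\}$ I use only that pairs inside $X$ must be visible: if $i, i' \in A_u \cap A_v$ with $i \ne i'$, then $v_i, u_{i'} \in X$ are not $X$-visible, since their two internal candidates $u_i$ and $v_{i'}$ both lie in $X$; this contradiction gives $|A_u \cap A_v| \le 1$. For $\tau \in \{\muo, \mut\}$ the visibility requirement also covers pairs with at least one endpoint in $X$, which forces a far more rigid picture: whenever $i \in A_u \cap A_v$, assuming some $j \in A_v$ with $j \ne i$ makes the pair $v_i, u_j$ non-visible (its candidates $u_i, v_j$ are in $X$), and this pair is constrained because $v_i \in X$; together with the symmetric argument this yields $A_u = A_v = \{i\}$. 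Thus for $\muo, \mut$ every $H$-vertex incident to an intra-layer edge occurs in a single layer.

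With these constraints in hand I would build $X'$ by deleting, in each layer $i$, a vertex cover of the graph $G_i$ of intra-layer edges present in that layer (for instance all its non-isolated vertices, which is a valid polynomial-time choice and avoids computing a minimum cover). By construction $X' \cap V(H_i)$ is independent for every $i$, so it only remains to bound the number of deleted vertices. For $\mu, \mud$ the total number of intra-layer edges is $\sum_{uv \in E(H)} |A_u \cap A_v| \le m(H) \le \binom{N}{2}$, so taking one endpoint per edge deletes fewer than $N^2 = N^\beta$ vertices. For $\muo, \mut$ the rigidity above makes the graphs $G_i$ pairwise vertex-disjoint (a vertex incident to an intra-layer edge lives in only one layer), so their vertex sets are disjoint subsets of $V(H)$ and at most $N = N^\beta$ vertices are deleted. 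In both cases $|X'| \ge |X| - N^\beta$, as required.

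The delicate parts I expect are the exact enumeration of the shortest $v_i$–$u_j$ paths — checking that no route through a third layer is as short, so that the internal vertices are exactly $u_i$ and $v_j$ — and, in the outer case, invoking the \emph{right} pair: because $\muo$ constrains only pairs touching $X$, one must argue through $v_i, u_j$ with $v_i \in X$ rather than an arbitrary pair. The whole strength of the result lives in upgrading the weak bound $|A_u \cap A_v| \le 1$ (which only buys the $N^2$ budget) to the vertex-disjointness of the $G_i$ (which buys the $N$ budget) in the $\muo$ and $\mut$ cases.
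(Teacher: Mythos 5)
Your proof is correct and takes essentially the same route as the paper's: the same enumeration of the two shortest paths between $v_i$ and $u_j$ gives the paper's claim that no edge of $H$ ``appears'' in two layers (your $|A_u\cap A_v|\le 1$), and for $\muo,\mut$ the same rigidity argument through a pair touching $X$ (the paper's ``$u_j,v_j\notin X$ for all $j\ne i$'' is exactly your $A_u=A_v=\{i\}$) upgrades the deletion budget from $N^2$ to $N$. The only cosmetic difference is in the final deletion step, where the paper removes both endpoints of every intra-layer edge while you remove a vertex cover of each layer's intra-layer edge graph; both stay within the stated $N^\beta$ bounds.
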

\begin{proof}
We say that an edge $(u,v) \in E(H)$ \emph{appears $k$ times} in $K_L \times H$ w.r.t. $X$ if there are $k$ distinct copies $(u_{i_1},v_{i_1}), \ldots, (u_{i_k},v_{i_k})$ of the edge $(u,v)$ such that $u_{i_j},v_{i_j} \in X$ for every $j\in [k]$.

We show that no edge $(u,v) \in E(H)$ appears $k\geq 2$ times in $K_L \times H$ w.r.t. $X$. Indeed, if there were two distinct copies $(u_i,v_i)$ and $(u_j,v_j)$ of $(u,v)$ in $H$ such that $u_i,v_i,u_j,v_j \in X$, then $u_i$ and $v_j$ would not be $X$-visible as the only two shortest paths from $u_i$ to $v_j$ pass through vertices $u_j$ and $v_i$, respectively. Furthermore, when $\tau \in \{\muo,\mut\}$, for each edge $(u,v) \in E(H)$ that appears $k=1$ times in $K_L \times H$ w.r.t. $X$, i.e., such that $u_i,v_i \in X$ for some $i \in [L]$, we have that $u_j,v_j \not \in X$ for every $j\in [L]$, with $i\neq j$. Indeed, if w.l.o.g. $u_j$ were contained in $X$, $u_i$ and $v_j$ would not be $X$-visible.

We compute a set $X''$ in polynomial time as follows. We start with $X''=\emptyset$. Next, for each edge $(u,v) \in E(H)$  that appears $k=1$ times in $K_L \times H$ w.r.t. $X$, i.e., there is a exactly one copy $(u_i,v_i)$ of $(u,v)$ in $H_i$ such that $u_i,v_i \in X$, we add both $u_i$ and $v_i$ to $X''$.
By construction, $X''$ is a subset of $X$ of size $|X''| \leq N^2$ as $H$ contains at most $\binom{N}{2}$ edges, each of which contributes with at most 2 vertices in $X''$. The upper bound on the size of $X''$ can be refined to $|X''| \leq N$ when $\tau \in \{\muo,\mut\}$ for the following reason.
Adjacent edges in $H$ that appear $k=1$ times in $K_L \times H$ w.r.t. $X$ must all appear in the same copy of $H$ in $K_L \times H$. This implies that each vertex $u$ of $H$ contributes with at most 1 vertex in $X''$.

Let $X'=X\setminus X''$. By construction, $|X'|\geq |X|-N^2$ when $\tau \in \{\mu,\mud\}$ and $|X'|\geq |X|-N$ when $\tau \in \{\muo,\mut\}$. Moreover, each edge of $H$ does not appear in $K_L \times H$ w.r.t. $X'$. As a consequence, $X' \cap V(H_i)$ is an independent set of $H_i$, for every $i \in [L]$. 
\end{proof}

\begin{lemma}\label{lemma:is_graph_feasible_clique_graph}
Let $H$ be a graph with $N\geq 2$ vertices containing a  vertex $z$ of degree $N-1$ and let  $S \subseteq V(H)$ be an independent set of $H$ that does not contain $z$. Let $L$ be a positive integer. The graph $K_L \times H$ has diameter of at most 3. Moreover, the set $X=\cup_{i \in [L]}S_i$, where $S_i=\{v_i \mid v \in S\}$ is the copy of $S$ in $H_i$, is a $\mut$-set of $K_L \times H$.
\end{lemma}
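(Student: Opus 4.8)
The plan is to first verify the diameter bound and then establish that $X=\bigcup_{i\in[L]}S_i$ is a $\mu_{\rm t}$-set, which by definition requires that \emph{every} pair of vertices of $K_L\times H$ be $X$-visible. I would begin by recording the relevant distances. Since $z$ has degree $N-1$ in $H$, it is adjacent to every other vertex of $H$, so $H$ has diameter at most $2$. In $K_L\times H$ two vertices $(a,u)$ and $(b,v)$ can be joined by first moving within a layer (using an $H$-edge, cost equal to $d_H(u,v)\le 2$) and then changing the clique coordinate (using a $K_L$-edge, cost $1$, available since $K_L$ is complete), or vice versa; hence $d(u_a,v_b)\le d_H(u,v)+1\le 3$, giving diameter at most $3$.

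Next I would argue the visibility of an arbitrary pair of vertices $p=u_a$ and $q=v_b$ of $K_L\times H$, splitting into cases according to whether the two coordinates agree. \emph{Same layer} ($a=b$): here I exhibit a shortest $u_a$--$v_a$ path internal to $H_a$; the natural candidate routes through the universal vertex $z_a$, and since $z\notin S$ we have $z_a\notin X$, so the internal vertex is outside $X$. \emph{Same $H$-vertex} ($u=v$, $a\ne b$): the edge $(u_a,u_b)$ of the $K_L$-layer is itself a shortest path with no internal vertices, so these are trivially $X$-visible. \emph{Both coordinates differ} ($u\ne v$, $a\ne b$): I would use a route of the form $u_a\to z_a\to z_b\to v_b$ or, when $u$ and $v$ are adjacent in $H$, a length-$2$ route $u_a\to u_b\to v_b$ (or $u_a\to v_a\to v_b$); the key point is to choose, whenever possible, an intermediate clique-copy of $z$, which is never in $X$ because $z\notin S$, and an intermediate vertex that corresponds to a vertex of $H$ \emph{not} in the independent set $S$. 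Here the independence of $S$ is what guarantees such an escape vertex exists: if $u,v\in S$ then, $S$ being independent, $u$ and $v$ are non-adjacent in $H$, forcing $d_H(u,v)=2$ with some common neighbour, and I can route through $z$ (a common neighbour lying outside $X$).

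The main obstacle I anticipate is the bookkeeping for the mixed case where both coordinates differ: one must certify not merely that \emph{a} short path exists but that a \emph{shortest} one avoids $X$ internally, and this interacts delicately with whether $u,v$ themselves lie in $S$. The safe uniform choice is to always detour through $z$: because $z$ is universal in $H$ and $z\notin S$, every copy $z_i$ is outside $X$ and is adjacent (within its layer) to every other vertex, so a path such as $\langle u_a, z_a, z_b, v_b\rangle$ of length~$3$, or $\langle u_a, z_a, v_a\rangle$ followed by the clique-edge, always has $X$-free interior whenever it is shortest. The remaining care is to confirm that these $z$-routes are genuinely shortest in the cases where they are used, falling back to the direct length-$2$ adjacency routes exactly when $u$ and $v$ are $H$-adjacent (so that $d_H(u,v)=1$ and a $z$-detour would be too long); in that adjacent case at least one of $u,v$ lies outside $S$ precisely because $S$ is independent, which supplies the $X$-free internal vertex. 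Assembling these cases shows every pair is $X$-visible, so $X$ is a $\mu_{\rm t}$-set, completing the proof.
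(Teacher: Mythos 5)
Your proof is correct and follows essentially the same route as the paper's: diameter via the universal vertex $z$ together with the clique formed by its $L$ copies, then a case analysis on whether the layer index and the $H$-coordinate agree, using $z_a$ (resp.\ $z_a,z_b$) as $X$-free internal vertices and invoking the independence of $S$ exactly for $H$-adjacent pairs in distinct layers. One small slip worth noting: your alternative route $\langle u_a, z_a, v_a, v_b\rangle$ does \emph{not} always have $X$-free interior (since $v_a\in X$ whenever $v\in S$), but your primary route $\langle u_a, z_a, z_b, v_b\rangle$ already handles that case, so the argument stands as the paper's does.
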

\begin{proof}
The graph $H$ has diameter of at most $2$ as $z$ is adjacent  to all other vertices of $H$. As a consequence, the graph $K_L\times H$ has diameter of at most 3 as all $L$ copies of $z$ in $K_L \times H$ form a clique. 

By definition, the endvertices of any edge of $K_L \times H$ are always $X$-visible. As copies of the same vertex of $H$ are connected via a clique in $K_L \times H$, we only need to consider the case of copies $u_i$ and $v_j$ of distinct vertices $u$ and $v$ of $H$ such that $(u_i,v_j)$ is not an edge in $K_L \times H$.

As $z_i \not \in S_i$ and $z_j \not \in S_j$, we have $z_i,z_j \not \in X$. Therefore, if $i= j$, both $u_i$ and $v_j=v_i$ are adjacent to $z_i$ and thus they are $X$-visible. 
For the case $i \neq j$ we split the proof into two subcases, according to whether $(u,v) \in E(H)$ or not. 

We consider the subcase $(u,v) \not \in E(H)$. First of all, we observe that $u_i$ and $v_j$ are at a distance of 3 in $K_L \times H$. As $K_L \times H$ contains the three edges $(u_i,z_i)$, $(z_i,z_j)$, and $(z_j,v_j)$ and $z_i,z_j \not \in X$, it follows that  $u_i$ and $v_j$ are $X$-visible.  

We consider the subcase $(u,v) \in E(H)$. First of all, we observe that $u_i$ and $v_j$ are a distance of 2 in $K_L \times H$. In particular, there are two shortest paths from $u_i$ to $v_j$: the one passing through $u_j$ and the one passing through $v_i$. As both $S_i$ and $S_j$ are copies of $S$, they are both independent set of layers $H_i$ and $H_j$, respectively. This implies that either $u_i,u_j \in X$ or $v_i,v_j \in X$. In either cases, $u_i$ and $v_j$ are $X$-visible.
\end{proof}

\begin{corollary}\label{cor:lb_mu_clique_graph}
Let $H$ be a graph with $N\geq 2$ vertices containing a  vertex $z$ of degree $N-1$. Let $L$ be a positive integer. Then $\mut(K_L \times H)\geq L \cdot \alpha(H)$.
\end{corollary}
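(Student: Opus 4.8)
The plan is to derive the corollary directly from \Cref{lemma:is_graph_feasible_clique_graph}, whose only hypothesis beyond what is assumed here is that the independent set $S$ used to build the $\mut$-set avoids the universal vertex $z$. So the entire task reduces to producing a maximum independent set of $H$ that does not contain $z$.

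First I would record the key structural observation: since $z$ has degree $N-1$, it is adjacent to every other vertex of $H$. Consequently, any independent set of $H$ that contains $z$ must equal $\{z\}$, and in particular no independent set of size at least $2$ can contain $z$. This immediately lets me exhibit a maximum independent set disjoint from $z$ by a short case analysis on $\alpha(H)$. If $\alpha(H) \ge 2$, then every maximum independent set $S$ already satisfies $z \notin S$ by the observation, and $|S| = \alpha(H)$. If instead $\alpha(H) = 1$, I simply take $S = \{v\}$ for an arbitrary vertex $v \ne z$, which exists because $N \ge 2$; this $S$ is an independent set with $z \notin S$ and $|S| = 1 = \alpha(H)$. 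In either case I obtain an independent set $S$ with $z \notin S$ and $|S| = \alpha(H)$.

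With such an $S$ in hand, I would invoke \Cref{lemma:is_graph_feasible_clique_graph} with this $S$ and the given $L$, which guarantees that $X = \cup_{i \in [L]} S_i$ is a $\mut$-set of $K_L \times H$. Since the copies $S_1, \dots, S_L$ lie in distinct layers $H_1, \dots, H_L$ and each has cardinality $|S|$, they are pairwise disjoint, so $|X| = L \cdot |S| = L \cdot \alpha(H)$. Because $\mut(K_L \times H)$ is by definition the size of a largest $\mut$-set, it follows that $\mut(K_L \times H) \ge |X| = L \cdot \alpha(H)$, as claimed.

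There is essentially no real obstacle here: the statement is a genuine corollary. The only point that requires a moment's care is the hypothesis $z \notin S$ demanded by \Cref{lemma:is_graph_feasible_clique_graph}, which is exactly why the universality of $z$ (degree $N-1$) must be used to argue that a maximum independent set avoiding $z$ always exists; the degenerate case $\alpha(H) = 1$, where the unique maximum independent sets might otherwise be $\{z\}$, is the only situation needing a separate sentence.
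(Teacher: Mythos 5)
Your proof is correct and follows exactly the route the paper intends: the corollary is stated there without proof as an immediate consequence of \Cref{lemma:is_graph_feasible_clique_graph}, applied to a maximum independent set avoiding $z$, with disjointness of the layer copies giving $|X| = L\cdot\alpha(H)$. Your explicit handling of the degenerate case $\alpha(H)=1$ (where $\{z\}$ could be the only obvious maximum independent set) is a careful touch the paper leaves implicit, but it is the same argument.
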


We can show the inapproximability results on graphs of diameter 3.

\begin{theorem}\label{theo:inapx_diam_3}
For every $\tau \in \{\mu,\mud,\muo,\mut\}$ and for every constant $\varepsilon > 0$, the problem of computing a $\tau$-set of maximum cardinality on graphs $G$ with $n$ vertices and diameter of at least $3$ cannot be approximated within a factor of $n^{1/3-\varepsilon}$ for $\tau \in \{\mu,\mud\}$ and a factor of $n^{1/2-\varepsilon}$ for $\tau \in \{\muo,\mut\}$, unless $\mathsf{P} = \mathsf{NP}$.
\end{theorem}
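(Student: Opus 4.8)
The plan is to give a single gap-preserving reduction from the \mis problem using the product graph $G = K_L \times H$ introduced above, and to read off the inapproximability of all four parameters at once by letting the exponent depend on $\beta$. Since \mis cannot be approximated within $N^{1-\delta}$ for any constant $\delta>0$ unless $\mathsf{P}=\mathsf{NP}$ (see~\cite{Zuckerman07}), it suffices to turn a hypothetical good approximation for $\tau$-sets into an $O(N^{1-\delta})$-approximation for \mis.

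First I would preprocess the input. Starting from an arbitrary instance $H_0$ of \mis, I may assume $H_0$ is not complete (otherwise $\alpha(H_0)=1$ and the problem is trivial), and I augment it with a single universal vertex $z$ to obtain a graph $H$ on $N$ vertices in which $z$ has degree $N-1$. Because $z$ is adjacent to every other vertex, every independent set of size at least $2$ avoids $z$, so $\alpha(H)=\alpha(H_0)$; moreover $H$ is not complete but has a universal vertex, hence it has diameter exactly $2$. This is precisely the hypothesis needed by \Cref{lemma:is_graph_feasible_clique_graph} and \Cref{cor:lb_mu_clique_graph}.

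The crucial parameter choice is $L=N^\beta$, where $\beta=2$ for $\tau\in\{\mu,\mud\}$ and $\beta=1$ for $\tau\in\{\muo,\mut\}$, so that $G=K_L\times H$ has $n(G)=L\cdot N=N^{\beta+1}$ vertices and, by \Cref{lemma:is_graph_feasible_clique_graph}, diameter at most $3$; the diameter is in fact exactly $3$, since the copies in two distinct layers of a pair of non-adjacent vertices of $H$ are at distance $3$. I would then sandwich $\tau(G)$ by $L\cdot\alpha(H)\le\tau(G)\le L\cdot\alpha(H)+N^\beta$. The lower bound holds for every $\tau$ because $\tau(G)\ge\mut(G)\ge L\cdot\alpha(H)$ by \Cref{cor:lb_mu_clique_graph} together with $\mu\ge\muo\ge\mut$ and $\mu\ge\mud\ge\mut$. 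For the upper bound I apply \Cref{lemma:property_cartesian_product_clique_graph} to an optimal $\tau$-set, obtaining a set $X'$ with $|X'|\ge\tau(G)-N^\beta$ whose restriction to each of the $L$ layers is independent; since each layer is a copy of $H$, summing the per-layer bound $|X'\cap V(H_i)|\le\alpha(H)$ over $i\in[L]$ gives $|X'|\le L\cdot\alpha(H)$.

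Now suppose, for contradiction, that an $n(G)^{1/(\beta+1)-\varepsilon}$-approximation algorithm $A$ for $\tau$-sets existed. Running $A$ on $G$ and then \Cref{lemma:property_cartesian_product_clique_graph}, I extract the largest layer-restriction as an independent set $S$ of $H$ satisfying $|S|\ge\frac{1}{L}\bigl(|X|-N^\beta\bigr)\ge\frac{\alpha(H)}{c}-\frac{N^\beta}{L}=\frac{\alpha(H)}{c}-1$, where $c=n(G)^{1/(\beta+1)-\varepsilon}=N^{1-(\beta+1)\varepsilon}$ and the last equality uses $L=N^\beta$. Returning $S$ when it is nonempty and an arbitrary single vertex otherwise yields a solution of size at least $\max\{|S|,1\}$, and a two-case analysis (whether $\alpha(H)\ge 2c$ or $\alpha(H)<2c$) gives an approximation ratio $\alpha(H)/\max\{|S|,1\}=O(c)=O\bigl(N^{1-(\beta+1)\varepsilon}\bigr)$ for \mis on $H$, equivalently on $H_0$ since $\alpha(H)=\alpha(H_0)$. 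Setting $\delta=(\beta+1)\varepsilon>0$ contradicts the $N^{1-\delta}$-inapproximability of \mis, which establishes the claimed bounds $n^{1/3-\varepsilon}$ (for $\beta=2$) and $n^{1/2-\varepsilon}$ (for $\beta=1$).

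The step I expect to be the main obstacle is controlling the additive slack. The bound $\tau(G)\le L\cdot\alpha(H)+N^\beta$ preserves the multiplicative gap only if the per-layer loss $N^\beta/L$ is bounded by a constant, which is exactly why the choice $L=N^\beta$ is forced and why $\beta$ dictates the final exponent. The residual $-1$ in the recovered size $|S|$ is harmless when $\alpha(H)$ is large but, when $\alpha(H)<2c$, must be absorbed by falling back to the trivial one-vertex solution; handling these two regimes together is what keeps the overall ratio $O(c)$ and turns the reduction into a genuine contradiction under $\mathsf{P}\neq\mathsf{NP}$.
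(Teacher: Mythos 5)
Your proposal is correct and follows essentially the same route as the paper: augment the \mis instance with a universal vertex, build $K_L \times H$ with $L = N^\beta$, use \Cref{cor:lb_mu_clique_graph} for the lower bound $\tau(G) \ge L\cdot\alpha(H)$ and \Cref{lemma:property_cartesian_product_clique_graph} to extract the best layer as an independent set, and close with the same two-regime analysis and the parameter correspondence $\delta = (\beta+1)\varepsilon$. The only deviations are cosmetic: your sandwich upper bound $\tau(G) \le L\cdot\alpha(H) + N^\beta$ is never used in the argument, and your explicit check that the diameter is exactly $3$ (via non-adjacent vertices in distinct layers) is a small point the paper glosses over but which your version handles cleanly.
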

\begin{proof}
First of all we observe that, for every constant $\delta>0$, the problem of computing a maximum independent set of a graph $H$ with $N\geq 2$ vertices and containing a vertex of degree $N-1$ is not approximable within a factor of $O(N^{1-\delta})$, unless $\mathsf{P} = \mathsf{NP}$. This is because given a graph $H'$ with $N-1$ vertices as an input instance of the \mis problem, we can construct a new graph $H$ by adding to $H'$ a new vertex $z$ of degree $N-1$ that is adjacent to all the vertices of $H'$. As $\alpha(H)=\alpha(H')$, any independent set of $H$ whose size approximates $\alpha(H)$ within a factor of $O(N^{1-\delta})$ can be used to compute an independent set of $H'$ whose size approximates $\alpha(H')$ within a factor of $O(N^{1-\delta})$ in polynomial time.

Given a graph $H$ with $N\geq 2$ vertices containing a vertex $z$ of degree $N-1$ as our input instance of the \mis problem, we build the graph $K_{L}\times H$ having $n=NL=N^{1+\beta}$ vertices by setting $L=N^\beta$, where $\beta=2$ if $\tau \in \{\mu,\mud\}$ and $\beta=1$ if $\tau \in \{\muo,\mut\}$. 

By~\Cref{lemma:is_graph_feasible_clique_graph}, $K_L\times H$ is a graph of diameter $3$. Let $\varepsilon = \delta/(\beta+1)$. We show that the existence of any polynomial-time algorithm that computes a $\tau$-set $X$ for $K_L \times H$ whose size approximates  $\tau(K_L \times H)$ within a factor of $n^{1/(\beta+1)-\varepsilon}$ would imply the existence of a polynomial-time algorithm that computes an independent set $S$ of $H$ whose size approximates $\alpha(H)$ within a factor of $O(N^{1-\delta})$, i.e., such that $|S|=\Omega\left(\frac{\alpha(H)}{N^{1-\delta}}\right)$.

For the sake of contradiction, assume that there is a polynomial-time algorithm that computes a $\tau$-set $X$ for $K_L \times H$ such that $|X|\geq \frac{\tau(K_{L}\times H)}{n^{1/(\beta +1)-\varepsilon}}$.
By~\Cref{cor:lb_mu_clique_graph}, we have that $\tau(K_{L}\times H)\geq \mut(K_{L}\times H)\geq  L\cdot \alpha(H)=N^{\beta} \cdot \alpha(H)$. Therefore, $|X|\geq \frac{N^{\beta}\cdot \alpha(H)}{N^{1-\delta}}$.

By~\Cref{lemma:property_cartesian_product_clique_graph}, given $X$ and $K_{L}\times H$, we can compute in polynomial time a subset $X'\subseteq X$ of size $|X'|\geq |X|-N^\beta$ such that, for every $i \in [L]$, $S_i:=X' \cap V(H_i)$ is an independent set of $H_i$. We now compute an independent set $S$ of $H$ as follows. 
If all $S_i$'s are empty sets, then $S=\{v\}$ where $v$ is any arbitrary vertex of $H$. If some $S_i$ are not empty, then let $i^* \in \arg\max \{|S_i|\mid  i \in [L]\}$ and let $S=\{v \mid v_{i^*} \in S_{i^*}\}$. As $|S_{i^*}|\geq |X'|/L$, we have that $|S|\geq |X'|/L$. By construction, we always return an independent set $S$ of size of at least $\max\{1,|X'|/L\}$. As a consequence, when $\alpha(H) < 2N^{1-\delta}$, $|S|$ already approximates $\alpha(H)$ within a factor of $O(N^{1-\delta})$. When $\alpha(H)\geq 2N^{1-\delta}$,  we have 
\begin{equation*}
|S| \geq \frac{|X'|}{L}= \frac{|X|-N^\beta}{N^\beta}\geq \frac{\frac{N^\beta\cdot \alpha(H)}{N^{1-\delta}}-N^\beta}{N^\beta}
    =\frac{\alpha(H)}{N^{1-\delta}}-1=\Omega\left(\frac{\alpha(H)}{N^{1-\delta}}\right).
    \tag*{\qedhere}
\end{equation*}
\end{proof}

\section{The general position problem and its relation with mutual visibility}
\label{sec:inaprox_gp}

In this section we consider a related visibility problem called \emph{general position} in which we want to compute a $\gp$-set $X \subseteq V(G)$ of a graph $G$, such that no shortest path in $G$ contains three distinct vertices from $X$.
We first show that the problem of finding a maximum-size $\gp$-set is not approximable within a factor of $n^{1-\varepsilon}$, unless $\mathsf{P}=\mathsf{NP}$, which was previously only known to be $\mathsf{NP}$-hard~\cite{manuel-2018}.
Then, we argue that $\frac{\mu(G)}{\gp(G)}$ can be as high as $\Omega( \frac{n}{\log^2 n} )$, and this is tight up to a $\log n$ factor, thus implying that the above inapproximability result does not carry over to the problem of computing a maximum $\mu$-set.

We start by stating the next  \Cref{lemma:indpedent_cliques_gp}, which is a consequence of Theorem~3.1 in~\cite{AnandCCKT19}.
\begin{definition}
    An \emph{independent clique} of a graph $G$ is a set $C \subseteq V(G)$ of vertices such that every connected component of $G[C]$ is a clique.
\end{definition}

\begin{lemma}[\cite{AnandCCKT19}]
    \label{lemma:indpedent_cliques_gp}
    Let $G$ be a graph with diameter $2$, and let $X \subseteq V(G)$. Then, $X$ is a $\gp$-set of $G$ if and only if $X$ is an independent clique of $G$.
\end{lemma}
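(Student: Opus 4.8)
The plan is to reduce both characterizations to a single combinatorial condition on $G[X]$ — namely the absence of an induced path on three vertices — and then to observe that this condition is exactly what each of the two definitions demands. Since $G$ has diameter $2$, I expect this equivalence to fall out almost immediately once the shortest-path structure is pinned down.

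First, I would analyze the structure of shortest paths in a diameter-$2$ graph. Because every pair of vertices is at distance $1$ or $2$, any shortest path has at most two edges and hence at most three vertices. Consequently, three distinct vertices $u,v,w$ lie on a common shortest path of $G$ if and only if, after relabeling, one of them (say $b$) is a common neighbor of the other two ($a$ and $c$) while $a$ and $c$ are non-adjacent. Indeed, a shortest path through three distinct vertices must have length exactly $2$, forcing it into the form $\langle a,b,c\rangle$ with $ab,bc \in E(G)$, and its endpoints satisfy $d(a,c)=2$, so $ac \notin E(G)$. Conversely, if $a,b,c$ realize this configuration then $\langle a,b,c\rangle$ is a shortest $a,c$-path (as $d(a,c)=2$ by diameter $2$ together with non-adjacency) passing through all three. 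In short, three vertices of $X$ lie on a common shortest path exactly when they induce a path on three vertices in $G[X]$.

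It follows that $X$ is a $\gp$-set if and only if $G[X]$ contains no induced $P_3$, i.e., no three distinct vertices $a,b,c \in X$ with $ab,bc \in E(G)$ and $ac \notin E(G)$. The second ingredient is the standard fact that a graph is a disjoint union of cliques — equivalently, each of its connected components is a clique — precisely when it is $P_3$-free: an induced $P_3$ witnesses a component that is not a clique, and conversely any non-complete component contains two vertices at distance $2$ within that component, whose connecting shortest path supplies an induced $P_3$. Applying this to $G[X]$ shows that $X$ being an independent clique is equivalent to $G[X]$ being $P_3$-free.

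Chaining the two equivalences yields the claim: $X$ is a $\gp$-set $\iff$ $G[X]$ is $P_3$-free $\iff$ $X$ is an independent clique. I do not anticipate a genuine obstacle; the only point requiring care is the first step, where the diameter-$2$ hypothesis is used twice — once to bound the length of a shortest path (so that three vertices on a shortest path necessarily induce a $P_3$) and once to guarantee that a non-adjacent pair is actually at distance $2$ (so that the $P_3$ they induce really lies on a shortest path). Without the diameter bound, neither direction of this step need hold, which is why the statement is specific to diameter $2$.
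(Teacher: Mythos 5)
Your proof is correct, and it is worth noting that it supplies something the paper deliberately omits: the paper gives no proof of this lemma at all, stating it as ``a consequence of Theorem~3.1 in~\cite{AnandCCKT19},'' a general characterization of $\gp$-sets in arbitrary graphs. Your route is a direct, self-contained argument specialized to diameter~$2$: any shortest path has length at most~$2$, so three distinct vertices of $X$ on a common shortest path are forced into the configuration $\langle a,b,c\rangle$ with $ab,bc\in E(G)$ and $ac\notin E(G)$, i.e., an induced $P_3$ in $G[X]$; conversely the diameter-$2$ hypothesis upgrades any induced $P_3$ in $G[X]$ into a genuine shortest path of $G$ (since $ac\notin E(G)$ forces $d(a,c)=2$); and the equivalence between $P_3$-freeness and every component being a clique is the standard fact you invoke, correctly proved via a non-adjacent pair at distance~$2$ inside a non-complete component. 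You are also right to flag the two distinct uses of the diameter hypothesis --- that is precisely where the lemma would fail in general. Compared with the paper's citation, your argument buys self-containedness and transparency about the role of diameter~$2$, at the cost of not exposing the general-graph characterization from which the cited theorem would let one derive analogous statements for larger diameters; as a verification of the lemma as stated, nothing is missing.
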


We can now give our inapproximability result:
\begin{theorem}
    \label{thm:gp_inapx}
    There exists no polynomial-time approximation algorithm for general position achieving an approximation ratio of $n^{1-\varepsilon}$ for some constant $\varepsilon > 0$, unless $\mathsf{P} = \mathsf{NP}$.
\end{theorem}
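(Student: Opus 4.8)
The plan is to reduce from \mis, exploiting the structural characterization of \Cref{lemma:indpedent_cliques_gp}: on diameter-$2$ graphs, $\gp$-sets coincide exactly with independent cliques. Since \mis is inapproximable within $O(n^{1-\delta})$ for every constant $\delta > 0$ (by~\cite{Zuckerman07}), it suffices to build, from an instance $H$ of \mis, a diameter-$2$ graph $G$ in which maximum $\gp$-sets correspond to maximum independent sets of $H$, with the number of vertices $n(G)$ polynomially related to $n(H)$ and the optimal values preserved up to a factor that is negligible compared to the inapproximability exponent.

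\textbf{The construction.}
First I would take the input graph $H$ and build $G$ so that its diameter is exactly $2$: the simplest route is to add a universal vertex $z$ adjacent to every vertex of $H$, and possibly subdivide or duplicate in order to prevent $z$ itself (and any small gadget vertices) from inflating the size of the optimal independent clique. The key point is that in a diameter-$2$ graph a $\gp$-set must be an independent clique, i.e.\ a disjoint union of cliques with no edges between distinct cliques. If $G[V(H)]$ can be arranged so that its only cliques of interest are single vertices of $H$ (for instance by ensuring $H$ is triangle-free, or by replacing each edge of $H$ with a gadget that forbids large cliques), then an independent clique of $G$ restricted to $V(H)$ is precisely an independent set of $H$. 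One must then verify that the extra vertices ($z$ and any gadget vertices) contribute only an additive constant (or $O(1)$) to any $\gp$-set, so that $\gp(G) = \alpha(H) + O(1)$ and, conversely, any $\gp$-set of $G$ yields an independent set of $H$ of essentially the same size in polynomial time.

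\textbf{Transferring the inapproximability.}
With $n(G) = \Theta(n(H))$ and $\gp(G) = \alpha(H) \pm O(1)$, a hypothetical polynomial-time $n(G)^{1-\varepsilon}$-approximation for $\gp$ would immediately yield an $O(n(H)^{1-\varepsilon})$-approximation for \mis, contradicting~\cite{Zuckerman07} (choosing $\delta = \varepsilon$). I would close by spelling out the approximation-ratio computation: the $O(1)$ additive slack is swamped by the $n(H)^{1-\varepsilon}$ factor once $n(H)$ is large enough, and small instances are handled by brute force.

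\textbf{Main obstacle.}
The delicate part is controlling cliques. \Cref{lemma:indpedent_cliques_gp} permits $\gp$-sets to contain entire cliques of $G$, not just independent sets, so if $H$ (or the gadgetry used to force diameter $2$) contains large cliques, the $\gp$-number could greatly exceed $\alpha(H)$ and destroy the reduction. The heart of the argument is therefore to design the reduction so that every component of the induced subgraph on any $\gp$-set is forced to be a \emph{single vertex} among the ``meaningful'' vertices — this is most cleanly achieved by first reducing \mis to its restriction on triangle-free graphs (preserving the $n^{1-\delta}$ hardness) and then adding the universal vertex, so that cliques in $G$ among $V(H)$ are limited to edges at most, which a careful argument rules out from any optimal $\gp$-set.
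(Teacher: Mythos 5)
Your overall skeleton (force diameter $2$ with a universal vertex, then invoke \Cref{lemma:indpedent_cliques_gp}) matches the paper's, but your reduction source contains a fatal gap. The step ``first reducing \mis to its restriction on triangle-free graphs (preserving the $n^{1-\delta}$ hardness)'' is false: \mis on triangle-free graphs admits an \emph{unconditional} polynomial-time $O\bigl(\sqrt{n/\log n}\bigr)$-approximation, since by the Ramsey bound on $R(3,k)$ every triangle-free graph on $n$ vertices contains an independent set of size $\Omega(\sqrt{n\log n})$, and such a set is found by a simple greedy/degree argument. Hence no $n^{1-\delta}$-inapproximability with $\delta<1/2$ can hold on that class, and the same Ramsey obstruction blocks every variant of your plan: if your gadgets cap the clique size at a constant $r$, then $\alpha(H)=\Omega(n^{1/(r-1)})$ is efficiently attainable and the importable hardness stays bounded away from $n^{1-\varepsilon}$ for fixed $r$; if you allow large cliques, \Cref{lemma:indpedent_cliques_gp} lets a $\gp$-set absorb them wholesale, so $\gp(G)$ decouples from $\alpha(H)$ and the reduction's value correspondence collapses. (Your secondary worry, ruling edges out of optimal $\gp$-sets, is by contrast harmless: choosing one vertex per component of an independent clique loses only a factor equal to the clique bound, which would be a constant.)

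The paper sidesteps exactly this trap by reducing not from \mis but from the \cliqueis problem of~\cite{Eppstein10}: find a maximum-cardinality set inducing \emph{either} a clique \emph{or} an independent set, which is inapproximable within $n^{1-\delta}$ for every constant $\delta<1$. With that source, no structure needs to be forbidden. One adds a universal vertex to get diameter $2$; the optimal \cliqueis solution is itself an independent clique, hence a $\gp$-set, so the $\gp$-optimum is at least as large; and from any $\gp$-set $X$ returned by the hypothetical algorithm one extracts, by pigeonhole on the components of the induced subgraph, either an independent set (one vertex per component) or a single clique component of size at least $\sqrt{|X|}$. The square-root loss converts an $n^{1-\varepsilon}$-approximation for $\gp$ into a $2n^{1-\varepsilon/2}$-approximation for \cliqueis, still a contradiction for every constant $\varepsilon>0$. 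To repair your write-up, replace the triangle-free \mis source (and the clique-destroying gadgetry) with \cliqueis and add this $\sqrt{|X|}$-extraction step; the rest of your outline then goes through essentially as in the paper.
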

\begin{proof}
Suppose towards a contradiction that there exists a polynomial-time $n^{1-\varepsilon}$-approximation algorithm $A$ for general position, where $\varepsilon < 1$ is some positive constant. 
The following \cliqueis problem was introduced in \cite{Eppstein10}: given a graph $G$, find a set of vertices of maximum cardinality that induces either an independent set or a clique of $G$.

Consider an instance $G$ of \cliqueis with a sufficiently large number $n$ of vertices, let $S^*$ be an optimal solution, and call $G'$ the graph obtained by adding a vertex $v$ adjacent to all vertices of $G$. 
We can obtain an approximation algorithm $A'$ for \cliqueis by running $A$ on $G'$ to obtain a $\gp$-set $X$, selecting a subset $\widetilde{X}$ of $X$ of size at least $\sqrt{|X|}$ that is either a clique or an independent set of $G'$, and returning $\widetilde{X} \setminus \{v\}$. As a technicality, we assume that $| \widetilde{X} | \ge 2$ (if this is not already the case, it suffices to select any arbitrary two vertices from $V(G')$).

By \Cref{lemma:indpedent_cliques_gp} we have that the size of an optimal $\gp$-set of $G'$ is at least $S^*$.
Moreover, given any $\gp$-set $C$ of $G'$, we can select a subset $C' \subseteq C$ of size at least $\sqrt{|C|}$ such that $C'$ is either an independent set or a clique of $G'$.\footnote{By \Cref{lemma:indpedent_cliques_gp}, $C$ is an independent clique of $G$. If $G[C]$ contains at least $\sqrt{|C|}$ connected components, we select an independent set by choosing on vertex from each component. Otherwise $G[C]$ contains at least one connected component of size $\sqrt{|C|}$, and this component is a clique.}

Therefore $|X| \ge \frac{|S^*|}{n^{1-\varepsilon}}$ and $\widetilde{X} \ge \sqrt{|X|} \ge \left( \frac{|S^*|}{n^{1-\varepsilon}} \right)^{1/2}$.
The approximation ratio achieved by $A'$ is upper bounded by:
\[
    \frac{|S^*|}{|\widetilde{X}|-1} \le 
    \frac{2 |S^*|}{|\widetilde{X}|} \le
    \frac{2 |S^*| \cdot \sqrt{ n^{1-\varepsilon}}}{\sqrt{|S^*|}} 
    = 2\sqrt{|S^*|} \cdot \sqrt{ n^{1-\varepsilon} }
    \le 2\sqrt{n} \cdot \sqrt{ n^{1-\varepsilon} }
    = 2n^{1 - \frac{\varepsilon}{2}},
\]
which contradicts the fact that \cliqueis is not approximable in polynomial-time within a factor $n^{1-\delta}$ for any positive constant $\delta < 1$. We conclude that, if $\mathsf{P} \neq \mathsf{NP}$, algorithm $A$ cannot exist.
\end{proof}

In view of the previous inapproximability result, one could hope to also obtain an inapproximability result for the problem of computing a maximum $\mu$-set for graphs having diameter $2$ by relating it with the problem of computing a maximum $\gp$-set on the same graph.
Indeed, defining $\beta(n)$ as a function that satisfies $\mu(G) \le \beta(n) \cdot \gp(G)$ for any graph $G$ having diameter $2$ and $n$ vertices, we would immediately have that the maximum $\mu$-set problem is not approximable within an approximation factor of $\Omega\left( \frac{n^{1-\varepsilon}}{\beta(n)} \right)$, for any constant $\varepsilon>0$, unless $\mathsf{P}=\mathsf{NP}$.

Unfortunately, the following lemma shows that no meaningful choice of $\beta(n)$ exists.
\begin{lemma}
    \label{lemma:beta_lb}
    Any function $\beta(n)$ such that  $\mu(G) \le \beta(n) \cdot \gp(G)$ for any graph $G$ of diameter $2$.
    is such that $\beta(n) = \Omega\left(\frac{n}{\log^2 n}\right)$.
\end{lemma}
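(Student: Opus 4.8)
The plan is to prove the lemma by exhibiting, for infinitely many $n$, a single graph $G$ of diameter $2$ on $n$ vertices with $\mu(G) = \Omega(n)$ and $\gp(G) = O(\log^2 n)$; since $\beta(n) \ge \mu(G)/\gp(G)$ for every such $G$, this forces $\beta(n) = \Omega(n/\log^2 n)$. The design is guided entirely by \Cref{lemma:indpedent_cliques_gp}: on a diameter-$2$ graph $\gp(G)$ equals the size of the largest independent clique (largest induced disjoint union of cliques), so I want a graph in which \emph{every} induced cluster subgraph is tiny, yet which still admits a huge mutual-visibility set. The tension is that the mutual-visibility set $X$ must itself avoid being a large independent clique, which rules out the two obvious choices (a clique or an independent set), and at the same time every non-adjacent pair of $X$ needs a common neighbour outside $X$.

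The construction I would use is a complete bipartite join of two Ramsey graphs. Let $R_1$ and $R_2$ be graphs on $m = n/2$ vertices with $\omega(R_i),\alpha(R_i) = O(\log m)$, which exist by the probabilistic method (existence is all that is needed for a lower bound on $\beta$). Let $X$ and $Y$ be their vertex sets, place the edges of $R_1$ inside $X$ and those of $R_2$ inside $Y$, and add \emph{all} edges between $X$ and $Y$. Then $G$ has diameter $2$: any non-adjacent pair inside $X$ (respectively $Y$) has every vertex of $Y$ (respectively $X$) as a common neighbour, and all $X$--$Y$ pairs are adjacent. Moreover $X$ is a $\mu$-set of size $m$, because each non-adjacent pair $x_i,x_j\in X$ is $X$-visible along $\langle x_i, y, x_j\rangle$ for any $y\in Y\setminus X$, while adjacent pairs are trivially visible; hence $\mu(G)\ge m = \Omega(n)$.

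The heart of the argument is the bound $\gp(G) = O(\log^2 n)$, which I would obtain through the independent-clique characterization by classifying an arbitrary independent clique $C$ according to which sides it meets. If $C\subseteq X$, then $G[C]$ has at most $\alpha(R_1)$ clique-components, each of size at most $\omega(R_1)$ (one vertex per component forms an independent set of $R_1$), so $|C|\le \alpha(R_1)\,\omega(R_1)=O(\log^2 n)$, and symmetrically for $C\subseteq Y$. If $C$ meets both sides, the complete join makes every vertex of $C\cap X$ adjacent to every vertex of $C\cap Y$, so $G[C]$ is connected and must therefore be a single clique, of size at most $\omega(R_1)+\omega(R_2)=O(\log n)$. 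Taking the maximum over the three cases yields $\gp(G)=O(\log^2 n)$, and combining with $\mu(G)\ge m$ gives $\mu(G)/\gp(G)=\Omega(n/\log^2 n)$.

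The main obstacle is purely in finding the right construction rather than in the verification, which is routine once the graph is fixed. The key realization is that both the ``visible'' part $X$ and the ``helper'' part $Y$ must be Ramsey-type graphs — neither a clique nor an independent set will do, as either would itself be a large independent clique and inflate $\gp(G)$ — and that the complete bipartite join is exactly the device that simultaneously (i) forces diameter $2$, (ii) supplies the external common neighbours that make $X$ a large $\mu$-set, and (iii) collapses every cross-part independent clique into a single small clique.
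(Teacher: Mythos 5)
Your proof is correct, and it shares the conceptual skeleton of the paper's proof --- both arguments rest on \Cref{lemma:indpedent_cliques_gp} together with the existence of Ramsey-type graphs having $\alpha,\omega=O(\log n)$ --- but your gadget differs from the paper's. The paper takes a single Ramsey graph $G'$ on $R(k,k)-1$ vertices and adds one universal vertex: this alone forces diameter $2$, makes all of $V(G')$ a $\mu$-set of size $n-1$ (every non-adjacent pair meets through the apex, which lies outside the set), and bounds every independent clique by (number of components) $\times$ (component size) $\le \alpha(G)\cdot\omega(G) = O(k^2) = O(\log^2 n)$ with no case analysis at all. Your join of two Ramsey halves is somewhat heavier: it needs the three-case classification of independent cliques (inside $X$, inside $Y$, crossing) and yields only $\mu(G)\ge n/2$; in particular, your closing claim that the helper side \emph{must} be Ramsey-type is an artifact of having made it linear-sized --- a one-vertex helper suffices, as the paper's apex shows, since a single universal vertex can enlarge any independent clique by at most one vertex while still supplying all the external midpoints. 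What your variant buys in exchange: Ramsey graphs with $\alpha,\omega=O(\log m)$ exist on \emph{every} number of vertices $m$, so your family realizes essentially every $n$ (pad one side for odd $n$), whereas the paper's construction exists only at the sparse sequence $n=R(k,k)$; your crossing case is also a genuinely new verification, confirming that the complete join collapses any cross-part independent clique into a single clique of size $O(\log n)$. Both routes correctly conclude $\mu(G)/\gp(G)=\Omega\bigl(n/\log^2 n\bigr)$, which forces the stated lower bound on $\beta(n)$.
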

\begin{proof}
    Let $R(k, k)$ be the minimum number of vertices such that any graph with $n \ge R(k,k)$ vertices contains a clique of $k$ vertices or an independent set of size $k$ (possibly both).
    It is known from Ramsey theory \cite{erdos1935combinatorial,erdos1947some} that:
    \begin{equation}
        \label{eq:ramsey}
        (1+o(1)) \cdot \frac{k}{e\sqrt{2}} 2^{k/2} \le R(k,k) \le (1 + o(1)) \cdot \frac{4^{k-1}}{\sqrt{\pi k}}.
    \end{equation}

    For a given $k \ge 2$, let $G'$ be a graph with $R(k,k)-1$ vertices  that contains neither an independent set of size $k$ nor a clique of size $k$ and let $G$ be the graph obtained by adding a new vertex adjacent to all vertices of $G'$. 
    
    We have that $G$ has $n=R(k,k)$ vertices, diameter $2$, and contains neither an independent set of size $k+1$ nor a clique of size $k+1$.
    From the first inequality of \Cref{eq:ramsey} we know that $n = \Omega(k 2^{k/2})$, which implies $k = O\left(\log \left( \frac{n}{\log n} \right) \right) = O(\log n)$.

    Thus, any set of vertices of $G$ that induces a subgraph consisting of independent cliques has size at most $k^2$. From \Cref{lemma:indpedent_cliques_gp}, we conclude that $\gp(G) = O(k^2)$ while $\mu(G)  \ge n - 1 = \Omega(n)$ since $V(G')$ is a $\mu$-set of $G$. This implies $\beta(n) =\Omega( \frac{n}{k^2} ) = \Omega(\frac{n}{\log^2 n})$.
\end{proof}

Next lemma shows that the bound of \Cref{lemma:beta_lb} is almost tight:
\begin{lemma}
    There exists a function $\beta(n) = O\left(\frac{n}{\log n}\right)$ such that  $\mu(G) \le \beta(n) \cdot \gp(G)$ for any graph $G$ of diameter $2$.
\end{lemma}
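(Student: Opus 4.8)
The plan is to reduce the statement to a lower bound on $\gp(G)$ by exploiting the trivial upper bound $\mu(G)\le n$ (which holds since any $\mu$-set is a subset of $V(G)$). Concretely, it suffices to exhibit a constant $c>0$ such that every diameter-$2$ graph $G$ on $n$ vertices satisfies $\gp(G)\ge c\log n$: once this is established, we immediately obtain $\frac{\mu(G)}{\gp(G)}\le\frac{n}{c\log n}=O\!\left(\frac{n}{\log n}\right)$, and setting $\beta(n)$ equal to this quantity proves the lemma. Thus the whole argument hinges on showing $\gp(G)=\Omega(\log n)$ for diameter-$2$ graphs.

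To obtain this lower bound I would invoke \Cref{lemma:indpedent_cliques_gp}, which characterizes the $\gp$-sets of a diameter-$2$ graph as exactly its independent cliques. The key elementary observation is that both a clique and an independent set are themselves independent cliques: a clique is a single complete component, while an independent set is a union of singleton components. Consequently $\gp(G)$ is at least the size of a largest clique of $G$ and at least the independence number $\alpha(G)$, so $\gp(G)\ge\max\{\,\text{(largest clique size)},\,\alpha(G)\,\}$.

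The heart of the argument is then a direct appeal to Ramsey's theorem: every $n$-vertex graph contains a clique or an independent set of size $k$ whenever $n\ge R(k,k)$. Using the upper bound $R(k,k)\le(1+o(1))\frac{4^{k-1}}{\sqrt{\pi k}}$ from \Cref{eq:ramsey} and taking logarithms, one checks that the inequality $n\ge R(k,k)$ holds for some $k=\frac{1}{2}\log_2 n + O(1)=\Omega(\log n)$. Hence every diameter-$2$ graph on $n$ vertices contains either a clique or an independent set, and therefore an independent clique, of size $\Omega(\log n)$, so $\gp(G)=\Omega(\log n)$ by \Cref{lemma:indpedent_cliques_gp}. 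Combining this with $\mu(G)\le n$ gives the claimed $\beta(n)=O\!\left(\frac{n}{\log n}\right)$.

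I do not expect a substantial obstacle in this proof: the result falls out once the trivial bound $\mu(G)\le n$ is paired with the Ramsey lower bound on the larger of the clique number and the independence number, mediated by the independent-clique characterization of $\gp$-sets. The only point requiring mild care is the routine computation that inverts the exponential upper bound on $R(k,k)$ to extract $k=\Omega(\log n)$; the subpolynomial factor $\frac{1}{\sqrt{\pi k}}$ and the $(1+o(1))$ term in \Cref{eq:ramsey} only help and do not affect the asymptotics.
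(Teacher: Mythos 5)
Your proposal is correct and follows essentially the same route as the paper: both pair the trivial bound $\mu(G)\le n$ with the Ramsey upper bound on $R(k,k)$ from \Cref{eq:ramsey} to extract a clique or independent set of size $k=\Omega(\log n)$, and then use \Cref{lemma:indpedent_cliques_gp} (a clique or an independent set being an independent clique) to conclude $\gp(G)\ge k$, hence $\beta(n)=O\bigl(\frac{n}{\log n}\bigr)$. The only cosmetic difference is that the paper fixes $k$ as the largest integer with $R(k,k)\le n$ and bounds $n<R(k+1,k+1)=O(4^k)$, whereas you invert the bound directly at $k=\frac{1}{2}\log_2 n+O(1)$; the computations are equivalent.
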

\begin{proof}
Let $G$ be any graph with diameter $2$ and $n$ vertices, and let $k$ be the largest integer such that $R(k,k) \le n$.
Then $n < R(k+1, k+1) = O(4^k/\sqrt{k}) = O(4^k)$ from the second inequality in \Cref{eq:ramsey}, which implies $k = \Omega(\log n)$.
Since $n \ge R(k, k)$, $G$ contains a clique of size $k$ or an independent set of size $k$. By \Cref{lemma:indpedent_cliques_gp}, $\gp(G) \ge k$ and we have:
\[
\mu(G) \le n = \frac{n}{k} \cdot k \le \frac{n}{k} \cdot \gp(G) = O\left(\frac{n}{\log n}\right) \cdot \gp(G),
\]
thus showing that $\beta(n) = O\left(\frac{n}{\log n}\right)$.
\end{proof}

\section{Open problems}\label{sec:open}
For the mutual-visibility number we presented a polynomial-time algorithm that finds a $\mu$-set with size $\Omega\left( \sqrt{n/ \overline{D}} \right)$, where $\overline{D}$ is the average distance in $G$. It would be interesting to study better algorithms to close the gap with the inapproximability result shown in \Cref{sec:InapVis} for the same problem. 
Of course, providing good approximation algorithms for finding the maximum values of the other invariants is also a challenging problem.

Given the inapproximability results for the problem of finding a maximum $\tau$-set, for $\tau \in \{\mu, \mud, \muo, \mut\}$, it would be relevant to introduce a relaxed version of the mutual visibility. Given a set of vertices $X$, we  could say that two vertices are in visibility if there exists a path connecting each pair of vertices without a further vertex in $X$ such that the length of the path is at most $\sigma$ times the distance between the two vertices, 
where $\sigma\geq 1$ is a fixed constant. 
Then, it would be interesting studying the computational complexity of the corresponding visibility problems, as well as the properties of the relative invariants from a graph theoretical point of view. 

In this setting, we observe that the inapproximability results of \Cref{tau-apx} and \Cref{mut-diam2} carry over to the case in which $1 \leq \sigma <3/2$ as they hold for the class of diameter-2 graphs. Similarly, the inapproximability results of \Cref{theo:inapx_diam_3} on graphs of diameter 3 also hold when $1 \leq \sigma < 4/3$. Moreover, we observe that the reduction provided in \Cref{tau-apx}, that we slightly modify by removing the vertex $y$ from $G$, results in a graph $G$ of diameter 2 in which the problem of computing a $\mu$-set of maximum size is still $\mathsf{APX}$-hard for every $\sigma \geq 1$.

\bibliographystyle{plain}
\bibliography{references}

\end{document}